\documentclass[journal]{IEEEtran}

\usepackage[a4paper, left=1.55cm, right=1.55cm, top=1.5cm, bottom=1.4cm]{geometry}  

\usepackage{amsmath}
\usepackage{amssymb}
\usepackage{amsthm}
\usepackage{tikz}
\usetikzlibrary{arrows.meta,positioning,fit,calc}
\theoremstyle{plain}
\newtheorem{theorem}{Theorem}       
\newtheorem{lemma}{Lemma}           
\newtheorem{proposition}{Proposition} 

\theoremstyle{definition}
\newtheorem{definition}{Definition} 

\theoremstyle{remark}
\newtheorem{remark}{Remark}         

\usepackage{graphicx}
\usepackage{url}
\usepackage{booktabs}       
\usepackage{multirow}
\usepackage{makecell}
\usepackage{array}
\usepackage{threeparttable}
\usepackage{algorithm}
\usepackage{algorithmic}
\usepackage{subfig}
\usepackage{float}
\usepackage{rotating}
\usepackage{adjustbox}
\usepackage{xcolor}
\usepackage{cite}
\usepackage{balance}
\usepackage{changepage}
\usepackage{etoolbox}

\begin{document}

	\title{FGFRFT: Fast Graph Fractional Fourier Transform via Exact Spectral Splitting and Fourier-Series Approximation}
	
	\author{Ziqi~Yan, Mingzhi~Wang, Sen~Shi, Feiyue~Zhao, Manjun~Cui, Yangfan~He, and Zhichao~Zhang,~\IEEEmembership{Member,~IEEE}
		\thanks{This work was supported in part by the Open Foundation of Hubei Key Laboratory of Applied Mathematics (Hubei University) under Grant HBAM202404; in part by the Foundation of Key Laboratory of System Control and Information Processing, Ministry of Education under Grant Scip20240121; and in part by the Startup Foundation for Introducing Talent of Nanjing Institute of Technology under Grant YKJ202214. \emph{(Corresponding author Zhichao~Zhang.)}}
		\thanks{Ziqi~Yan, Mingzhi~Wang, Sen~Shi, Feiyue~Zhao, and Manjun~Cui are with the School of Mathematics and Statistics, Nanjing University of Information Science and Technology, Nanjing 210044, China (e-mail yanziqi54@gmail.com;wmz200208@163.com; 202312380032@nuist.edu.cn; 202511150010@nuist.edu.cn; cmj1109@163.com).}
		\thanks{Yangfan~He is with the School of Communication and Artificial Intelligence, School of Integrated Circuits, Nanjing Institute of Technology, Nanjing 211167, China, and also with the Jiangsu Province Engineering Research Center of IntelliSense Technology and System, Nanjing 211167, China (e-mail Yangfan.He@njit.edu.cn).}
		\thanks{Zhichao~Zhang is with the School of Mathematics and Statistics, Nanjing University of Information Science and Technology, Nanjing 210044, China, with the Hubei Key Laboratory of Applied Mathematics, Hubei University, Wuhan 430062, China, and also with the Key Laboratory of System Control and Information Processing, Ministry of Education, Shanghai Jiao Tong University, Shanghai 200240, China (e-mail zzc910731@163.com).}}

	\maketitle

\begin{abstract}
	The graph fractional Fourier transform (GFRFT) for unitary graph Fourier transform (GFT) matrices can be interpreted through the scalar function $e^{j\alpha\theta}$ on the unit circle. Under the principal branch, its Fourier-series representation encounters an intrinsic obstruction at the spectral point $\lambda=-1$ for non-integer orders. To address this issue, we propose a fast graph fractional Fourier transform (FGFRFT) based on exact spectral splitting: the $\lambda=-1$ component is treated exactly, and the complementary component is approximated by a truncated Fourier series in integer powers of the GFT matrix. This construction yields an offline--online implementation that reduces the online complexity of repeated operator updates from $O(N^3)$ to $O(2LN^2)$ for truncation order $L$, while preserving differentiability with respect to the transform order. We further derive truncation-error bounds, approximate unitarity and  additivity, and reconstruction-error bounds. Experiments on approximation accuracy, transform-order learning, image denoising, and point-cloud denoising show that FGFRFT provides substantial online acceleration while remaining close to the exact GFRFT under the tested settings.
\end{abstract}

	\begin{IEEEkeywords}
		Graph signal processing, graph fractional Fourier transform, Fourier-series approximation, spectral splitting, fast transform, approximation analysis.
	\end{IEEEkeywords}
	\section{Introduction}

	\IEEEPARstart{I}{n} recent years, graph signal processing (GSP) has emerged as a powerful theoretical framework for analyzing high-dimensional data in irregular domains, such as social networks, biological systems, and sensor data \cite{ref1,ref2,ref3,ref4}. It has been widely applied to tasks such as filtering, sampling, and reconstruction \cite{ref5,ref6,ref7,ref8,ref9,ref10,ref11}. As the cornerstone of GSP, the graph Fourier transform (GFT) is the fundamental tool for transforming graph signals from the vertex domain to the spectral domain. Building on this foundation, researchers have generalized the GFT to the graph fractional Fourier transform (GFRFT) by introducing the concept of transform order. The GFRFT can transform graph signals from the vertex domain to an intermediate domain between the vertex and spectral domains, thereby offering a more flexible approach for the representation and processing of graph signals \cite{ref12,ref13,ref14,ref15,ref16,ref17,ref18}.
	
Wang et al. first established the mathematical framework of the GFRFT based on the eigendecomposition of the graph Laplacian matrix \cite{ref19}. Since then, GFRFT has shown advantages in sampling and reconstruction, filtering and detection, data compression, and graph-signal encryption \cite{ref20,ref21,ref22,ref23,ref24}. More recently, adaptive and learnable variants have been developed to move beyond grid-search-based order selection, including adaptive order updates, trainable joint time-vertex transforms, and fractional filtering models for denoising, classification, and graph-analytic signal construction \cite{ref25,ref26,ref27,ref28,ref29}. Despite these advances, repeated order selection or order updates still require expensive eigendecomposition and dense matrix operations, which creates a major computational bottleneck and limits the scalability of GFRFT-based methods in repeated-query and learning scenarios.

To address this issue, we revisit repeated GFRFT order updates from a spectral-analytic perspective. For unitary GFT matrices, the GFRFT can be characterized by the scalar function $e^{j\alpha\theta}$ on the unit circle, which suggests a Fourier-series-based operator representation. Under the principal branch, however, the non-integer-order case exhibits an intrinsic obstruction at the spectral point $\lambda=-1$, where the standard sinc-series construction fails to reproduce the exact GFRFT on the associated eigenspace. We therefore isolate the $\lambda=-1$ component and treat it exactly, while approximating the complementary component by a truncated Fourier series in integer powers of the GFT matrix. This leads to FGFRFT, a structured offline--online framework for repeated order updates under unitary GFT matrices. Computationally, the proposed method reduces the online complexity of dense operator construction from $O(N^3)$ to $O(2LN^2)$ for truncation order $L$, while preserving differentiability with respect to the transform order and compatibility with gradient-based optimization.

The main contributions of this paper are as follows
\begin{itemize}
	\item We reveal and resolve the principal-branch obstruction at $\lambda=-1$ in the Fourier-series representation of non-integer-order GFRFT.
	\item We propose FGFRFT, an offline--online Fourier-series framework for repeated GFRFT order updates under unitary GFT matrices, with online complexity $O(2LN^2)$.
	\item We provide theoretical analysis of FGFRFT, including error bounds, adjoint symmetry, differentiability, approximate unitarity and  additivity, and reconstruction guarantees.
	\item We verify the proposed method through approximation, robustness, and downstream learning and denoising experiments.
\end{itemize}
    
	The remainder of this paper is organized as follows. Section~II reviews the preliminaries. Section~III presents the proposed FGFRFT and its theoretical properties. Section~IV analyzes the approximation accuracy and computational efficiency of the method. Section~V reports the experimental results. Section~VI concludes this paper. The overall framework is illustrated in Fig. \ref{figjishu}.
	
	\begin{figure*}[htbp]
		\centering
		\includegraphics[width=0.8\linewidth]{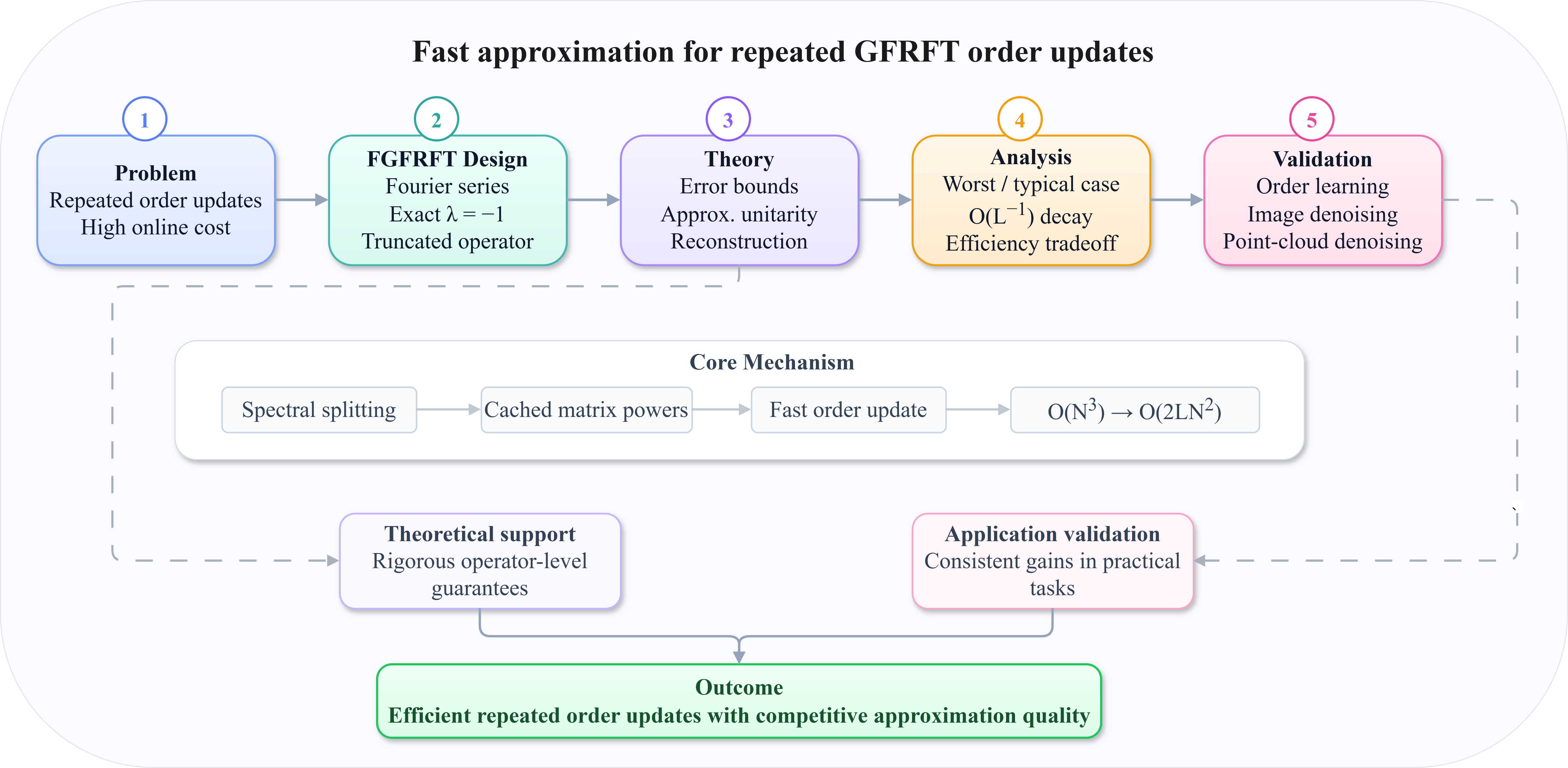} 
		\caption{Overall framework of the paper}
		\label{figjishu} 
	\end{figure*}

\textit{Notation}:
Scalars are denoted by lowercase letters, vectors by bold lowercase letters, and matrices by bold uppercase letters. $(\cdot)^H$ denotes the Hermitian transpose, $\|\cdot\|_2$ and $\|\cdot\|_F$ denote the spectral norm and Frobenius norm, respectively, and $\mathrm{diag}(\cdot)$ denotes a diagonal matrix formed from its arguments. The identity matrix is denoted by $\mathbf I$.
\section{Preliminaries}

\subsection{Graph Fourier Transform}
We define a graph structure as $\mathcal{G} = (\mathcal{V}, \mathbf{A})$, where $\mathcal{V} = \{v_1, \dots, v_N\}$ contains $N$ nodes. The weighted adjacency matrix $\mathbf{A} \in \mathbb{R}^{N \times N}$ describes the connectivity relationships between nodes, where its element $\mathbf{A}_{m,n}$ is non-zero if and only if there exists an edge between node $n$ and node $m$; if $\mathbf{A}_{m,n} = 0$, it indicates no connection. Typically, if $\mathbf{A}_{m,n} = \mathbf{A}_{n,m}$ is satisfied, the graph is considered an undirected graph \cite{ref31}.
The definition of the GFT relies on the spectral decomposition of the graph shift operator (GSO). Let $\mathbf{Z} \in \mathbb{R}^{N \times N}$ be a symmetric GSO of arbitrary form; common choices include the adjacency matrix $\mathbf{A}$, the Laplacian matrix $\mathbf{L}$, and their normalized forms \cite{ref32,ref33,ref34}. The decomposition of $\mathbf{Z}$ can be obtained as \cite{ref19}
\begin{equation}
	\mathbf{Z} = \mathbf{U_Z}\mathbf{\Sigma_Z}\mathbf{U_Z}^{H},
\end{equation}
where $\mathbf{\Sigma_Z}$ is the diagonal eigenvalue matrix, and the column vectors of $\mathbf{U_Z} = [\mathbf{u}_1, \dots, \mathbf{u}_N]$ constitute the eigenvectors of $\mathbf{Z}$. Based on this, the GFT matrix is defined as the inverse of the eigenvector matrix, i.e., $\mathbf{F}_G = \mathbf{U_Z}^{H}$.

For any graph signal $\mathbf{x}$, its spectral representation $\tilde{\mathbf{x}}$ in the GFT domain can be calculated via the following formula
\begin{equation}
	\tilde{\mathbf{x}} \triangleq \mathbf{F}_G \mathbf{x} = \mathbf{U_Z}^{H} \mathbf{x}.
\end{equation}

Correspondingly, the inverse graph Fourier transform (IGFT) is used to recover the graph signal from the spectral domain to the vertex domain
\begin{equation}
	\mathbf{x} = \mathbf{F}_G^{-1} \tilde{\mathbf{x}} = \mathbf{U_Z}\tilde{\mathbf{x}}.
\end{equation}

\subsection{Graph Fractional Fourier Transform}
The GFRFT is a generalization of the traditional GFT, aiming to introduce the theory of fractional Fourier transform from classical signal processing into the graph signal domain. Given an arbitrary graph $\mathcal{G}$ and its symmetric GSO matrix $\mathbf{Z}$, we first consider the eigendecomposition form of the unitary GFT matrix $\mathbf{F}_G$\cite{ref25}
\begin{equation}
	\mathbf{F}_G = \mathbf{U}_{\mathbf{Z}}^{H} = \mathbf{V}\mathbf{\Sigma}\mathbf{V}^H,
\end{equation}
where $\mathbf{\Sigma}$ is the diagonal eigenvalue matrix of $\mathbf{F}_G$. The GFRFT matrix with order $\alpha \in \mathbb{R}$ can be defined as
\begin{equation}
	\mathbf{F}_G^\alpha = \mathbf{V}\mathbf{\Sigma}^\alpha\mathbf{V}^H.
\end{equation}

The above definition satisfy index additivity and possess desirable boundary properties degenerating to the identity matrix $\mathbf{I}$ when $\alpha=0$, and becoming the standard GFT matrix $\mathbf{F}_G$ when $\alpha=1$.

\subsection{Fourier Series Approximation}

The Fourier series is a core mathematical tool for analyzing periodic functions \cite{ref35,ref36,ref37,ref38}. For a $2\pi$-periodic function $g \in L^2([-\pi,\pi])$, its Fourier series is written as
\begin{equation}
	g(\theta) \sim \sum_{n=-\infty}^{\infty} c_n e^{j n \theta}, 
	\qquad \theta \in [-\pi,\pi],
\end{equation}
where the Fourier coefficients are
\begin{equation}
	c_n = \frac{1}{2\pi}\int_{-\pi}^{\pi} g(\theta)e^{-j n \theta}\, d\theta.
\end{equation}

If, in addition, $g$ satisfies the Dirichlet conditions, then its Fourier series converges pointwise to $g(\theta)$ at every continuity point and to
$\frac{g(\theta^-)+g(\theta^+)}{2}$
at every jump point. Uniform convergence generally requires stronger regularity and is not assumed here.
\section{FGFRFT}

This section develops a Fourier-series-based approximation of the GFRFT matrix for the case where the GFT matrix is unitary. A key issue in this construction is the spectral point $\lambda=-1$, which becomes an intrinsic singular point of the sinc-series approximation under the principal branch for non-integer orders. We therefore begin by identifying the associated obstruction, then isolate the corresponding spectral component and treat it exactly, and finally establish the basic properties, approximation error bounds, reconstruction guarantees, and computational complexity of the resulting fast GFRFT (FGFRFT).

\subsection{Spectral Representation and the Singular Point $\lambda=-1$}

Let $\mathbf{F}_{G}\in\mathbb{C}^{N\times N}$ be a unitary GFT matrix. Then
\begin{equation}
	\mathbf{F}_{G}
	=
	\mathbf{V}\mathbf{\Sigma}\mathbf{V}^{H}
	=
	\mathbf{V}\,\mathrm{diag}\!\bigl(e^{j\theta_{1}},\dots,e^{j\theta_{N}}\bigr)\mathbf{V}^{H},
\end{equation}
where $\mathbf{V}$ is unitary and the eigenphases are chosen in the half-open interval $\theta_{k}\in(-\pi,\pi], \qquad k=1,\dots,N$.

Under the principal branch, the GFRFT matrix of order $\alpha\in\mathbb{R}$ is defined by
\begin{equation}
	\mathbf{F}_{G}^{\alpha}
	=
	\mathbf{V}\,\mathrm{diag}\!\bigl(e^{j\alpha\theta_{1}},\dots,e^{j\alpha\theta_{N}}\bigr)\mathbf{V}^{H}.
\end{equation}
Accordingly, the scalar generating function is $g_{\alpha}(\theta)=e^{j\alpha\theta}$.

For $\alpha\in\mathbb{R}$, the $2\pi$-periodic Fourier coefficients of $g_{\alpha}$ are
\begin{equation}
	c_{n}(\alpha)
	=
	\frac{1}{2\pi}\int_{-\pi}^{\pi} e^{j(\alpha-n)\theta}\,d\theta
	=
	\frac{\sin\bigl(\pi(\alpha-n)\bigr)}{\pi(\alpha-n)},
	\qquad n\in\mathbb{Z}.
\end{equation}
Hence, for $\alpha\notin\mathbb{Z}$, $\sum_{n=-\infty}^{\infty} c_{n}(\alpha)e^{jn\theta}$ is the Fourier series associated with the $2\pi$-periodic extension of $g_{\alpha}(\theta)=e^{j\alpha\theta}$.

\begin{remark}[Intrinsic obstruction at $\lambda=-1$]
	Let $\alpha\in\mathbb{R}\setminus\mathbb{Z}$. Then the Fourier series $\sum_{n=-\infty}^{\infty} c_{n}(\alpha)e^{jn\theta}$
	converges to $e^{j\alpha\theta}$ for every $\theta\in(-\pi,\pi)$, but at the endpoint $\theta=\pi$ it converges to the midpoint value
	\begin{equation}
		\frac{e^{j\pi\alpha}+e^{-j\pi\alpha}}{2}
		=
		\cos(\pi\alpha),
	\end{equation}
	rather than to the principal-branch value $e^{j\pi\alpha}$. Consequently, the original sinc-series representation cannot reproduce $\mathbf{F}_{G}^{\alpha}$ on any eigenspace associated with the eigenvalue $\lambda=-1$ for non-integer $\alpha$.
\end{remark}

\begin{remark}[Integer-order case]
	When $\alpha\in\mathbb{Z}$, the Fourier coefficients reduce to the Kronecker delta sequence $c_{n}(\alpha)=\delta_{n,\alpha}$,
	and hence the Fourier-series representation collapses exactly to
		$\sum_{n=-\infty}^{\infty} c_{n}(\alpha)\mathbf{F}_{G}^{n}
		=
		\mathbf{F}_{G}^{\alpha}$.
	Therefore, no truncation or approximation is needed for integer orders. In what follows, the FGFRFT approximation framework is developed only for $\alpha\in\mathbb{R}\setminus\mathbb{Z}$.
\end{remark}

\subsection{Exact Spectral Splitting and FGFRFT Definition}

The remark above shows that the spectral point $\lambda=-1$ is an intrinsic singularity of the sinc-series approximation under the principal branch. We now isolate the spectral component associated with $\lambda=-1$ and treat it exactly.

Define the index set $\mathcal{I}_{-1}=\{k:\theta_k=\pi\}$. For notational convenience, define
\begin{equation}
	\delta_k^{(-1)}
	=
	\begin{cases}
		1, & k\in\mathcal{I}_{-1},\\
		0, & k\notin\mathcal{I}_{-1},
	\end{cases}
	\qquad
	\delta_k^{(c)}=1-\delta_k^{(-1)}.
\end{equation}
Let
\begin{equation}
	\mathbf{P}_{-1}
	=
	\mathbf{V}\,\mathrm{diag}\!\bigl(\delta_1^{(-1)},\dots,\delta_N^{(-1)}\bigr)\mathbf{V}^{H}
\end{equation}
be the orthogonal projector onto the eigenspace corresponding to $\lambda=-1$, and define the complementary projector
\begin{equation}
	\mathbf{P}_c=\mathbf{I}-\mathbf{P}_{-1}.
\end{equation}
Since $\mathbf{P}_{-1}$ and $\mathbf{P}_c$ are spectral projectors of $\mathbf{F}_G$, both commute with $\mathbf{F}_G$ and with all integer powers $\mathbf{F}_G^n$.

\begin{proposition}[Exact spectral representation with isolated $\lambda=-1$ component]
	For every $\alpha\in\mathbb{R}\setminus\mathbb{Z}$,
	\begin{equation}
		\mathbf{F}_G^\alpha
		=
		e^{j\pi\alpha}\mathbf{P}_{-1}
		+
		\sum_{n=-\infty}^{\infty} c_n(\alpha)\mathbf{F}_G^n\mathbf{P}_c.
		\label{eqexact_split_representation}
	\end{equation}
\end{proposition}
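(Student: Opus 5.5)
The plan is to diagonalize everything in the eigenbasis $\mathbf{V}$ and reduce the matrix identity to $N$ scalar identities, one per eigenphase $\theta_k$. First, write $\mathbf{F}_G^n=\mathbf{V}\,\mathrm{diag}(e^{jn\theta_1},\dots,e^{jn\theta_N})\mathbf{V}^H$ for every $n\in\mathbb{Z}$. This holds because $\mathbf{V}$ is unitary and integer powers (including negative ones, via $\mathbf{F}_G^{-1}=\mathbf{F}_G^H$) act entrywise on the diagonal factor. Combined with $\mathbf{P}_c=\mathbf{V}\,\mathrm{diag}(\delta_1^{(c)},\dots,\delta_N^{(c)})\mathbf{V}^H$, this gives
\begin{equation*}
\mathbf{F}_G^n\mathbf{P}_c=\mathbf{V}\,\mathrm{diag}\bigl(\delta_1^{(c)}e^{jn\theta_1},\dots,\delta_N^{(c)}e^{jn\theta_N}\bigr)\mathbf{V}^H .
\end{equation*}
Similarly, $e^{j\pi\alpha}\mathbf{P}_{-1}$ is $\mathbf{V}$ times the diagonal with entries $\delta_k^{(-1)}e^{j\pi\alpha}=\delta_k^{(-1)}e^{j\alpha\theta_k}$, since $\theta_k=\pi$ on $\mathcal{I}_{-1}$.

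Second, I will make the meaning of the bi-infinite matrix series precise. I interpret it as the limit of the symmetric partial sums $\sum_{|n|\le L}$, which matches the truncation order $L$ used later for FGFRFT. Each partial sum equals $\mathbf{V}\,\mathrm{diag}\bigl(\delta_k^{(c)}S_L(\theta_k)\bigr)\mathbf{V}^H$, where $S_L(\theta)=\sum_{|n|\le L}c_n(\alpha)e^{jn\theta}$. Since we are in finite dimension and $\mathbf{V}$ is a fixed unitary, convergence in any norm is equivalent to entrywise convergence of these $N$ diagonal scalars.

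Third, I verify the scalar limits. For $k\notin\mathcal{I}_{-1}$ we have $\theta_k\in(-\pi,\pi)$. By the Remark on the intrinsic obstruction (equivalently, Dirichlet's theorem, since $e^{j\alpha\theta}$ is smooth on $(-\pi,\pi)$ and has a jump only at $\pm\pi$ in its periodic extension), $S_L(\theta_k)\to e^{j\alpha\theta_k}$. For $k\in\mathcal{I}_{-1}$ the factor $\delta_k^{(c)}=0$ kills the term whatever the limit of $S_L(\pi)$ is; the series is even convergent there, to $\cos(\pi\alpha)$, but that value is irrelevant.

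Finally, I add the two diagonal pieces. The $k$-th diagonal entry becomes $\delta_k^{(-1)}e^{j\alpha\theta_k}+\delta_k^{(c)}e^{j\alpha\theta_k}=e^{j\alpha\theta_k}$. Conjugating by $\mathbf{V}$ then gives exactly $\mathbf{F}_G^\alpha$ by its principal-branch definition.

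The main subtlety is the second step. The sinc coefficients decay only like $1/|n|$, so the matrix series is not absolutely convergent, and the identity must be stated for symmetric partial sums. Pointwise convergence at the finitely many phases $\theta_k\in(-\pi,\pi)$ is all that is required, so no uniform convergence is needed. I would state this interpretation explicitly at the start of the proof.
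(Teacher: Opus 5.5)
Your proposal is correct and follows the same route as the paper: diagonalize in $\mathbf{V}$, split the diagonal into the $\lambda=-1$ entries (where $e^{j\pi\alpha}=e^{j\alpha\theta_k}$) and the complementary entries, and use pointwise convergence of the sinc series at the finitely many phases $\theta_k\in(-\pi,\pi)$. Reading the bi-infinite sum as a limit of symmetric partial sums, with convergence reduced to the $N$ diagonal scalars, just makes explicit the interchange of summation and conjugation by $\mathbf{V}$ that the paper performs without comment.
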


\begin{proof}
	From the spectral definition,
	\begin{equation}
		\mathbf{F}_G^\alpha
		=
		\mathbf{V}\,\mathrm{diag}\!\bigl(e^{j\alpha\theta_1},\dots,e^{j\alpha\theta_N}\bigr)\mathbf{V}^{H}.
	\end{equation}
	Since $\theta_k=\pi$ for every $k\in\mathcal{I}_{-1}$, the diagonal matrix can be decomposed as
	\begin{equation}
		\begin{aligned}
			\mathrm{diag}\!\bigl(e^{j\alpha\theta_1},\dots,e^{j\alpha\theta_N}\bigr)
			&=
			e^{j\pi\alpha}\,\mathrm{diag}\!\bigl(\delta_1^{(-1)},\dots,\delta_N^{(-1)}\bigr) \\
			&\quad +
			\mathrm{diag}\!\bigl(e^{j\alpha\theta_1}\delta_1^{(c)},\dots,e^{j\alpha\theta_N}\delta_N^{(c)}\bigr).
		\end{aligned}
	\end{equation}
	Multiplying by $\mathbf{V}$ and $\mathbf{V}^H$ on the left and right gives
	\begin{equation}
		\mathbf{F}_G^\alpha
		=
		e^{j\pi\alpha}\mathbf{P}_{-1}
		+
		\mathbf{V}\,\mathrm{diag}\!\bigl(e^{j\alpha\theta_1}\delta_1^{(c)},\dots,e^{j\alpha\theta_N}\delta_N^{(c)}\bigr)\mathbf{V}^{H}.
	\end{equation}
	For every $k\notin\mathcal{I}_{-1}$, one has $\theta_k\in(-\pi,\pi)$, and hence
	\begin{equation}
		e^{j\alpha\theta_k}
		=
		\sum_{n=-\infty}^{\infty} c_n(\alpha)e^{jn\theta_k}.
	\end{equation}
	Therefore,
	\begin{equation}
		\begin{aligned}
			&\mathbf{V}\,\mathrm{diag}\!\bigl(e^{j\alpha\theta_1}\delta_1^{(c)},\dots,e^{j\alpha\theta_N}\delta_N^{(c)}\bigr)\mathbf{V}^{H} \\
			&=
			\sum_{n=-\infty}^{\infty}
			c_n(\alpha)\,
			\mathbf{V}\,\mathrm{diag}\!\bigl(e^{jn\theta_1}\delta_1^{(c)},\dots,e^{jn\theta_N}\delta_N^{(c)}\bigr)\mathbf{V}^{H}.
		\end{aligned}
	\end{equation}
	Since
	\begin{equation}
		\mathbf{F}_G^n\mathbf{P}_c
		=
		\mathbf{V}\,\mathrm{diag}\!\bigl(e^{jn\theta_1}\delta_1^{(c)},\dots,e^{jn\theta_N}\delta_N^{(c)}\bigr)\mathbf{V}^{H},
	\end{equation}
	we obtain
	\begin{equation}
		\mathbf{V}\,\mathrm{diag}\!\bigl(e^{j\alpha\theta_1}\delta_1^{(c)},\dots,e^{j\alpha\theta_N}\delta_N^{(c)}\bigr)\mathbf{V}^{H}
		=
		\sum_{n=-\infty}^{\infty} c_n(\alpha)\mathbf{F}_G^n\mathbf{P}_c.
	\end{equation}
	Combining this with the exact contribution on the $\lambda=-1$ component proves the result.
\end{proof}

Motivated by \eqref{eqexact_split_representation}, we define the truncated fast approximation.

\begin{definition}[$L$-order FGFRFT operator]
	Let $\alpha\in\mathbb{R}\setminus\mathbb{Z}$ and $L\in\mathbb{N}\cup\{0\}$. The $L$-order FGFRFT operator is defined by
	\begin{equation}
		\mathbf{Q}_L^\alpha
		=
		e^{j\pi\alpha}\mathbf{P}_{-1}
		+
		\sum_{n=-L}^{L} c_n(\alpha)\mathbf{F}_G^n\mathbf{P}_c.
		\label{eqFGFRFT_definition}
	\end{equation}
	For $\alpha\in\mathbb{Z}$, we set $\mathbf{Q}_L^\alpha=\mathbf{F}_G^\alpha$.
\end{definition}

\begin{remark}[Reduction to the standard truncated form]
	If $\mathbf{F}_G$ has no eigenvalue at $-1$, then $\mathbf{P}_{-1}=\mathbf{0}$ and $\mathbf{P}_c=\mathbf{I}$. In this case, \eqref{eqFGFRFT_definition} reduces to the standard truncated Fourier-series form
		$\mathbf{Q}_L^\alpha
		=
		\sum_{n=-L}^{L} c_n(\alpha)\mathbf{F}_G^n$.
	Therefore, the proposed definition is a strict extension of the standard construction, with the additional term only needed to treat the spectral point $\lambda=-1$ exactly.
\end{remark}

\begin{definition}[FGFRFT of a graph signal]
	For a graph signal $\mathbf{x}\in\mathbb{C}^{N}$, the FGFRFT of order $\alpha$ and truncation order $L$ is defined by
	\begin{equation}
		\widetilde{\mathbf{x}}_{\alpha}^{L}
		=
		\mathbf{Q}_L^\alpha\mathbf{x}.
	\end{equation}
\end{definition}

\begin{remark}[Why pointwise validity is sufficient]
	The matrix identity \eqref{eqexact_split_representation} does not require uniform convergence of the Fourier series on the full interval $(-\pi,\pi]$. It only requires the scalar Fourier identity to hold at the finitely many complementary spectral samples $\{\theta_{k}:k\notin\mathcal{I}_{-1}\}\subset(-\pi,\pi)$, where pointwise validity is sufficient.
\end{remark}
\subsection{Basic Properties of the FGFRFT Operator}

We next establish the algebraic properties that will be used in the reconstruction and approximation analysis.

\begin{proposition}[Efficient computation]
	For $\alpha\in\mathbb{R}\setminus\mathbb{Z}$,
\begin{equation}
	\begin{aligned}
		\mathbf{Q}_{L}^{\alpha}
		&= e^{j\pi\alpha}\mathbf{P}_{-1} + c_{0}(\alpha)\mathbf{P}_{c} \\
		&+ \sum_{n=1}^{L} \left[ c_{n}(\alpha)\mathbf{F}_{G}^{n}\mathbf{P}_{c} + c_{-n}(\alpha)(\mathbf{F}_{G}^{H})^{n}\mathbf{P}_{c} \right].
		\label{eqefficient_computation}
	\end{aligned}
\end{equation}
\end{proposition}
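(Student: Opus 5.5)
The plan is to start from the definition \eqref{eqFGFRFT_definition} of $\mathbf{Q}_L^\alpha$ and rearrange the finite sum $\sum_{n=-L}^{L} c_n(\alpha)\mathbf{F}_G^n\mathbf{P}_c$ into three groups: the term $n=0$, the positive indices $n=1,\dots,L$, and the negative indices $n=-1,\dots,-L$. Since the sum is finite, this regrouping is purely algebraic and needs no convergence argument. The term $n=0$ gives $c_0(\alpha)\mathbf{F}_G^0\mathbf{P}_c=c_0(\alpha)\mathbf{P}_c$, because $\mathbf{F}_G^0=\mathbf{I}$. The positive-index terms already appear in \eqref{eqefficient_computation} exactly as written.

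For the negative indices, reindex by $n\mapsto -n$. This gives $\sum_{n=1}^{L}c_{-n}(\alpha)\mathbf{F}_G^{-n}\mathbf{P}_c$. The key step is the identity $\mathbf{F}_G^{-n}=(\mathbf{F}_G^{H})^{n}$ for $n\ge 1$. It follows from unitarity, since $\mathbf{F}_G^{-1}=\mathbf{F}_G^{H}$ and therefore $\mathbf{F}_G^{-n}=(\mathbf{F}_G^{-1})^n=(\mathbf{F}_G^{H})^n$.

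One point needs care: $\mathbf{F}_G^n$ for negative integer $n$ in \eqref{eqFGFRFT_definition} is meant as the matrix inverse power, and this must agree with the spectral form. I would check it through the eigendecomposition:
\begin{equation*}
\mathbf{F}_G^{-n}=\mathbf{V}\,\mathrm{diag}\bigl(e^{-jn\theta_1},\dots,e^{-jn\theta_N}\bigr)\mathbf{V}^H=\bigl(\mathbf{V}\,\mathrm{diag}(e^{-j\theta_k})\mathbf{V}^H\bigr)^n=(\mathbf{F}_G^H)^n .
\end{equation*}
Here $\mathbf{F}_G^H=\mathbf{V}\overline{\mathbf{\Sigma}}\mathbf{V}^H$, and for integer exponents the principal-branch ambiguity disappears.

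Putting the three groups back together with the untouched term $e^{j\pi\alpha}\mathbf{P}_{-1}$ yields \eqref{eqefficient_computation}. I expect no real obstacle; the only subtle step is the consistency of integer powers noted above. I would also remark that the projector may sit on either side of each power, since $\mathbf{P}_c$ commutes with $\mathbf{F}_G$ and hence with $\mathbf{F}_G^H$. This justifies computing all powers offline while sharing a single projection. Finally, I would note the practical point: only the positive powers $\mathbf{F}_G^n$, $n\le L$, need to be formed, and their conjugate transposes supply the negative powers. This is what makes the $O(2LN^2)$ online cost possible.
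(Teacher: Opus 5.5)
Your proposal is correct and follows the same route as the paper: split the finite sum into the indices $n=0$, $n>0$ and $n<0$, then rewrite the negative-index terms using $\mathbf{F}_G^{-n}=(\mathbf{F}_G^{H})^{n}$, which holds because $\mathbf{F}_G$ is unitary. Your spectral check that integer powers are consistent, and your remark that $\mathbf{P}_c$ commutes with $\mathbf{F}_G^{H}$, are valid details that the paper leaves implicit.
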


\begin{proof}
	Since $\mathbf{F}_{G}$ is unitary,
$\mathbf{F}_{G}^{-n}=(\mathbf{F}_{G}^{H})^{n}$, $n\ge 1$. Separating the indices $n=0$, $n>0$, and $n<0$ in \eqref{eqFGFRFT_definition}, we obtain
\begin{equation}
	\begin{aligned}
		\sum_{n=-L}^{L} c_{n}(\alpha)\mathbf{F}_{G}^{n}\mathbf{P}_{c}
		&= c_{0}(\alpha)\mathbf{P}_{c} + \sum_{n=1}^{L} \bigl[ c_{n}(\alpha)\mathbf{F}_{G}^{n}\mathbf{P}_{c} \\
		&\quad + c_{-n}(\alpha)\mathbf{F}_{G}^{-n}\mathbf{P}_{c} \bigr].
	\end{aligned}
\end{equation}

	Substituting $\mathbf{F}_{G}^{-n}=(\mathbf{F}_{G}^{H})^{n}$ yields \eqref{eqefficient_computation}.
\end{proof}

\begin{proposition}[Adjoint symmetry]
	For every $\alpha\in\mathbb{R}$ and every integer $L\ge 0$,
	\begin{equation}
		\mathbf{Q}_{L}^{-\alpha}
		=
		(\mathbf{Q}_{L}^{\alpha})^{H}.
		\label{eqadjoint_symmetry}
	\end{equation}
\end{proposition}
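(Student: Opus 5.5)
The plan is to separate integer and non-integer orders, since $\mathbf{Q}_L^\alpha$ is defined differently in the two cases; note that $\alpha\in\mathbb{Z}$ if and only if $-\alpha\in\mathbb{Z}$, so each case closes on itself.

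\emph{Integer case.} For $\alpha\in\mathbb{Z}$ we have $\mathbf{Q}_L^{\alpha}=\mathbf{F}_G^{\alpha}$. Unitarity of $\mathbf{F}_G$ gives $(\mathbf{F}_G^{\alpha})^H=(\mathbf{F}_G^H)^{\alpha}=\mathbf{F}_G^{-\alpha}=\mathbf{Q}_L^{-\alpha}$. This holds for negative $\alpha$ as well, using $\mathbf{F}_G^{-1}=\mathbf{F}_G^H$.

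\emph{Non-integer case.} For $\alpha\notin\mathbb{Z}$ we take the Hermitian transpose of \eqref{eqFGFRFT_definition} term by term, which needs three facts.
\begin{enumerate}
\item The coefficients $c_n(\alpha)$ are real, so $\overline{c_n(\alpha)}=c_n(\alpha)$. From the sinc formula, $\sin(\pi(\alpha-n))/(\pi(\alpha-n))$ is an even function of its argument. Hence $c_n(\alpha)=c_{-n}(-\alpha)$, since $-\alpha-(-n)=-(\alpha-n)$.
\item $\mathbf{P}_{-1}$ and $\mathbf{P}_c$ are orthogonal projectors, so they are Hermitian. They also commute with every $\mathbf{F}_G^n$, as stated before the proposition.
\item $(\mathbf{F}_G^n)^H=\mathbf{F}_G^{-n}$ by unitarity.
\end{enumerate}
Using these, $(\mathbf{F}_G^n\mathbf{P}_c)^H=\mathbf{P}_c\mathbf{F}_G^{-n}=\mathbf{F}_G^{-n}\mathbf{P}_c$, and therefore
\[
(\mathbf{Q}_L^\alpha)^H=e^{-j\pi\alpha}\mathbf{P}_{-1}+\sum_{n=-L}^{L}c_n(\alpha)\mathbf{F}_G^{-n}\mathbf{P}_c .
\]
We now reindex with $m=-n$. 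The range $-L\le n\le L$ is symmetric, so it maps onto itself, and $c_{-m}(\alpha)=c_m(-\alpha)$ by the evenness in fact 1. This turns the sum into $\sum_{m=-L}^{L}c_m(-\alpha)\mathbf{F}_G^{m}\mathbf{P}_c$. The first term $e^{-j\pi\alpha}\mathbf{P}_{-1}=e^{j\pi(-\alpha)}\mathbf{P}_{-1}$ matches the $-\alpha$ definition, and the result is exactly $\mathbf{Q}_L^{-\alpha}$.

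\emph{Main obstacle.} There is no analytic difficulty, only bookkeeping. The one subtle point is the exact $\lambda=-1$ term. One might worry that the principal branch breaks the symmetry, since $(-1)^{-\alpha}$ under the principal branch is $e^{-j\pi\alpha}$ and not $e^{j\pi\alpha}$ taken at $\theta=-\pi$. But the definition of $\mathbf{Q}_L^{-\alpha}$ uses the factor $e^{j\pi(-\alpha)}$ explicitly, so the conjugate matches with no appeal to branch choices. Indeed the same identity holds for the exact operator: $(\mathbf{F}_G^\alpha)^H=\mathbf{F}_G^{-\alpha}$ because $\overline{e^{j\alpha\theta_k}}=e^{-j\alpha\theta_k}$, which confirms consistency.
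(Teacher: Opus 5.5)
Your proposal is correct and follows the paper's proof. Both use the integer/non-integer split, the identity $c_n(-\alpha)=c_{-n}(\alpha)$, the reindexing $m=-n$, and the fact that $\mathbf{P}_c$ is Hermitian and commutes with $\mathbf{F}_G^{m}$; the only differences are that you start from $(\mathbf{Q}_L^\alpha)^H$ rather than $\mathbf{Q}_L^{-\alpha}$, and you state explicitly that the $c_n(\alpha)$ are real, which the paper uses without comment when it pulls the sum inside the Hermitian transpose.
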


\begin{proof}
	If $\alpha\in\mathbb{Z}$, then $\mathbf{Q}_{L}^{\alpha}=\mathbf{F}_{G}^{\alpha}$ and
	\begin{equation}
		\mathbf{Q}_{L}^{-\alpha}
		=
		\mathbf{F}_{G}^{-\alpha}
		=
		(\mathbf{F}_{G}^{\alpha})^{H}
		=
		(\mathbf{Q}_{L}^{\alpha})^{H}.
	\end{equation}
	It remains to consider $\alpha\in\mathbb{R}\setminus\mathbb{Z}$. From the definition,
	\begin{equation}
		\mathbf{Q}_{L}^{-\alpha}
		=
		e^{-j\pi\alpha}\mathbf{P}_{-1}
		+
		\sum_{n=-L}^{L} c_{n}(-\alpha)\mathbf{F}_{G}^{n}\mathbf{P}_{c}.
	\end{equation}
	Using
	\begin{equation}
		c_{n}(-\alpha)
		=
		\frac{\sin\bigl(\pi(-\alpha-n)\bigr)}{\pi(-\alpha-n)}
		=
		\frac{\sin\bigl(\pi(\alpha+n)\bigr)}{\pi(\alpha+n)}
		=
		c_{-n}(\alpha),
	\end{equation}
	we obtain
	\begin{equation}
		\mathbf{Q}_{L}^{-\alpha}
		=
		e^{-j\pi\alpha}\mathbf{P}_{-1}
		+
		\sum_{n=-L}^{L} c_{-n}(\alpha)\mathbf{F}_{G}^{n}\mathbf{P}_{c}.
	\end{equation}
	By the change of index $m=-n$,
	\begin{equation}
		\mathbf{Q}_{L}^{-\alpha}
		=
		e^{-j\pi\alpha}\mathbf{P}_{-1}
		+
		\sum_{m=-L}^{L} c_{m}(\alpha)\mathbf{F}_{G}^{-m}\mathbf{P}_{c}.
	\end{equation}
	Since $\mathbf{P}_{c}$ is Hermitian and commutes with $\mathbf{F}_{G}^{m}$,
	\begin{equation}
		\mathbf{F}_{G}^{-m}\mathbf{P}_{c}
		=
		\bigl(\mathbf{F}_{G}^{m}\mathbf{P}_{c}\bigr)^{H}.
	\end{equation}
	Therefore,
	\begin{equation}
		\begin{aligned}
			\mathbf{Q}_{L}^{-\alpha}
			&=
			e^{-j\pi\alpha}\mathbf{P}_{-1}
			+
			\sum_{m=-L}^{L} c_{m}(\alpha)\bigl(\mathbf{F}_{G}^{m}\mathbf{P}_{c}\bigr)^{H} \\
			&=
			\left(
			e^{j\pi\alpha}\mathbf{P}_{-1}
			+
			\sum_{m=-L}^{L} c_{m}(\alpha)\mathbf{F}_{G}^{m}\mathbf{P}_{c}
			\right)^{H}
			=
			(\mathbf{Q}_{L}^{\alpha})^{H}.
		\end{aligned}
	\end{equation}
\end{proof}

\begin{proposition}[Differentiability with respect to the order]
	For $\alpha\in\mathbb{R}\setminus\mathbb{Z}$, the FGFRFT operator $\mathbf{Q}_{L}^{\alpha}$ is differentiable with respect to $\alpha$, and
	\begin{equation}
		\frac{\partial \mathbf{Q}_{L}^{\alpha}}{\partial \alpha}
		=
		j\pi e^{j\pi\alpha}\mathbf{P}_{-1}
		+
		\sum_{n=-L}^{L}
		\frac{d c_{n}(\alpha)}{d\alpha}\,
		\mathbf{F}_{G}^{n}\mathbf{P}_{c},
		\label{eqdifferentiability}
	\end{equation}
	where
	\begin{equation}
		\frac{d c_{n}(\alpha)}{d\alpha}
		=
		\frac{d}{d\alpha}
		\left(
		\frac{\sin\bigl(\pi(\alpha-n)\bigr)}{\pi(\alpha-n)}
		\right).
	\end{equation}
\end{proposition}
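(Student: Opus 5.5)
The plan is to differentiate the defining expression \eqref{eqFGFRFT_definition} term by term. On any open interval of $\mathbb{R}\setminus\mathbb{Z}$, the matrices $\mathbf{P}_{-1}$, $\mathbf{P}_c$ and $\mathbf{F}_G^n$ for $|n|\le L$ do not depend on $\alpha$. They come from the eigendecomposition of $\mathbf{F}_G$ alone. So $\mathbf{Q}_L^\alpha$ is a finite linear combination of $2L+2$ fixed matrices, with scalar coefficients $e^{j\pi\alpha}$ and $c_n(\alpha)$, $n=-L,\dots,L$.

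A matrix-valued function of this form is differentiable entrywise exactly when its scalar coefficients are differentiable. Its derivative is then the same linear combination with each coefficient replaced by its derivative. This is just linearity of differentiation applied to each of the $N^2$ entries, since each entry is a finite sum of coefficient times constant.

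Next I check each coefficient.
\begin{itemize}
\item The coefficient $e^{j\pi\alpha}$ is entire, with derivative $j\pi e^{j\pi\alpha}$.
\item For each fixed $n$ with $|n|\le L$, the function $c_n(\alpha)=\sin(\pi(\alpha-n))/(\pi(\alpha-n))$ is a quotient of smooth functions. Its denominator vanishes only at $\alpha=n\in\mathbb{Z}$, so it is smooth on $\mathbb{R}\setminus\mathbb{Z}$.
\item The quotient rule gives the explicit derivative
\begin{equation*}
\frac{d c_n(\alpha)}{d\alpha}=\frac{\cos(\pi(\alpha-n))}{\alpha-n}-\frac{\sin(\pi(\alpha-n))}{\pi(\alpha-n)^2}.
\end{equation*}
\end{itemize}

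Substituting these derivatives gives \eqref{eqdifferentiability} directly.

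There is no real obstacle. The only point to state carefully is that the split $\mathbf{P}_{-1}$, $\mathbf{P}_c$ is independent of $\alpha$. The set $\mathcal{I}_{-1}$ is determined by the fixed eigenphases of $\mathbf{F}_G$, not by the order. Consequently, no derivative of the projectors or of $\mathbf{V}$ appears.

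Optionally, I would remark that $c_n$ extends to an entire function of $\alpha$, namely the normalized sinc. This means the formula stays valid in a limiting sense near integers, but the statement only requires $\alpha\notin\mathbb{Z}$, where the definition \eqref{eqFGFRFT_definition} applies.
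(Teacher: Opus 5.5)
Your proposal is correct and follows essentially the paper's argument: $\mathbf{P}_{-1}$, $\mathbf{P}_c$ and $\mathbf{F}_G^n$ do not depend on $\alpha$, so differentiating the finite sum term by term gives the formula, and your explicit quotient-rule expression for $dc_n/d\alpha$ is correct. One small correction: you only need the ``if'' direction of your claim that such a combination is differentiable ``exactly when'' its coefficients are; the converse can fail when the fixed matrices are linearly dependent, for example when $\mathbf{P}_{-1}=\mathbf{0}$.
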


\begin{proof}
	For fixed $L$, the matrices $\mathbf{P}_{-1}$, $\mathbf{P}_{c}$, and $\mathbf{F}_{G}^{n}$ are independent of $\alpha$. Hence the dependence of $\mathbf{Q}_{L}^{\alpha}$ on $\alpha$ appears only through the scalar factor $e^{j\pi\alpha}$ and the coefficients $c_{n}(\alpha)$. Differentiating \eqref{eqFGFRFT_definition} term by term yields \eqref{eqdifferentiability}. Since the sinc function is smooth on $\mathbb{R}$, the derivative exists and is well defined.
\end{proof}

\begin{definition}[Approximate inverse FGFRFT]
	The approximate inverse FGFRFT is defined by
	\begin{equation}
		\widehat{\mathbf{x}}
		=
		\mathbf{Q}_{L}^{-\alpha}\widetilde{\mathbf{x}}_{\alpha}^{L}
		=
		(\mathbf{Q}_{L}^{\alpha})^{H}\mathbf{Q}_{L}^{\alpha}\mathbf{x},
	\end{equation}
	where the second equality follows from \eqref{eqadjoint_symmetry}.
\end{definition}

\subsection{Approximation Error Analysis}

We now derive the approximation error bounds for the non-integer case $\alpha\in\mathbb{R}\setminus\mathbb{Z}$. Since the $\lambda=-1$ component is treated exactly, the truncation error arises only from the complementary spectral samples.

\subsubsection{Auxiliary Results}

The following classical tools will be used.

\begin{lemma}[Abel's summation formula]
	Let $\{u_{n}\}$ and $\{v_{n}\}$ be sequences of complex numbers. For integers $M>L+1$,
	\begin{equation}
		\sum_{n=L+1}^{M} u_{n}v_{n}
		=
		U_{M}v_{M}
		+
		\sum_{n=L+1}^{M-1} U_{n}(v_{n}-v_{n+1}),
	\end{equation}
	where
	\begin{equation}
		U_{k}=\sum_{i=L+1}^{k}u_{i}.
	\end{equation}
\end{lemma}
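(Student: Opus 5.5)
The plan is to prove Abel's summation formula by direct telescoping, writing each $u_n$ as a difference of consecutive partial sums. Set the convention $U_L=0$, which is consistent with the definition $U_k=\sum_{i=L+1}^{k}u_i$ as an empty sum. Then for every $n$ with $L+1\le n\le M$ we have $u_n=U_n-U_{n-1}$. Substituting this gives
\begin{equation}
	\sum_{n=L+1}^{M} u_n v_n
	=
	\sum_{n=L+1}^{M} U_n v_n
	-
	\sum_{n=L+1}^{M} U_{n-1} v_n .
\end{equation}

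Next, shift the index in the second sum with $m=n-1$. Since $U_L=0$, the term $m=L$ drops out, and the second sum becomes $\sum_{m=L+1}^{M-1} U_m v_{m+1}$. In the first sum, split off the last term $n=M$, so that it equals $U_M v_M+\sum_{n=L+1}^{M-1}U_n v_n$. Subtracting the two expressions termwise over the common range $L+1\le n\le M-1$ gives
\begin{equation}
	U_M v_M+\sum_{n=L+1}^{M-1}U_n\,(v_n-v_{n+1}),
\end{equation}
which is the claimed identity. The hypothesis $M>L+1$ guarantees that the range $L+1,\dots,M-1$ is nonempty, although the identity would still hold with an empty sum.

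No step is a genuine obstacle; the argument is purely algebraic. The only point needing care is the bookkeeping of the endpoints: checking that the $n=L+1$ term of the shifted sum vanishes because $U_L=0$, and that the $n=M$ term of the first sum is the one left outside. An alternative check is induction on $M$. The case $M=L+2$ is verified directly, and the step from $M$ to $M+1$ follows from $U_{M+1}v_{M+1}-U_Mv_M+U_M(v_M-v_{M+1})=u_{M+1}v_{M+1}$.
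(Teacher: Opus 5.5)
Your proof is correct. With $U_L=0$, you substitute $u_n=U_n-U_{n-1}$, shift the index so that the $U_Lv_{L+1}$ term vanishes, and split off $U_Mv_M$; this gives exactly the stated identity, and your inductive cross-check $U_{M+1}v_{M+1}-U_Mv_M+U_M(v_M-v_{M+1})=u_{M+1}v_{M+1}$ is also right. The paper gives no proof of this lemma and lists it among ``classical tools,'' so your telescoping derivation is the standard argument the paper takes for granted.
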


\begin{lemma}[Dirichlet's test]
	The series $\sum_{n=L+1}^{\infty}u_{n}v_{n}$ converges if the partial sums $U_{k}=\sum_{i=L+1}^{k}u_{i}$ are uniformly bounded and the sequence $\{v_{n}\}_{n\ge L+1}$ is monotone decreasing to zero.
\end{lemma}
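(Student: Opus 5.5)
The statement to prove is Dirichlet's test in the form stated: if the partial sums $U_k=\sum_{i=L+1}^{k}u_i$ satisfy $|U_k|\le B$ for all $k\ge L+1$, and $v_n\downarrow 0$, then $\sum_{n=L+1}^\infty u_nv_n$ converges. The plan is to show that the partial sums $S_M=\sum_{n=L+1}^{M}u_nv_n$ form a Cauchy sequence, using Abel's summation formula (the preceding lemma) to move the summation onto the bounded sequence $U_n$ and the telescoping differences $v_n-v_{n+1}$.

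By the lemma, for $M>L+1$,
\begin{equation*}
S_M=U_Mv_M+\sum_{n=L+1}^{M-1}U_n(v_n-v_{n+1}).
\end{equation*}
The boundary term is easy: $|U_Mv_M|\le Bv_M\to 0$ because $v_M\to 0$, with $v_M\ge 0$ since the sequence decreases to zero.

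For the sum, monotonicity gives $v_n-v_{n+1}\ge 0$. Hence
\begin{equation*}
\sum_{n}|U_n(v_n-v_{n+1})|\le B\sum_{n}(v_n-v_{n+1}).
\end{equation*}
The right-hand sum telescopes, with partial sums $v_{L+1}-v_M\le v_{L+1}$. So $\sum_n U_n(v_n-v_{n+1})$ converges absolutely, and its partial sums have a limit. Consequently $S_M$ converges as $M\to\infty$, being the sum of a convergent sequence and a sequence tending to zero.

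If a Cauchy formulation is preferred, apply Abel summation from $m+1$ to $M$ instead. This gives
\begin{equation*}
|S_M-S_m|\le 2B\,v_{m+1}+B(v_{m+1}-v_M)\le 3Bv_{m+1}\to 0.
\end{equation*}

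There is no real obstacle here. The only point needing care is the sign argument: nonnegativity of $v_n-v_{n+1}$ is what lets the telescoping sum control the absolute values. This is exactly where the monotonicity hypothesis enters.

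In the paper's later use, I expect $u_n=e^{jn\theta}$ with $\theta\in(-\pi,\pi)$, where the geometric-sum bound $|U_k|\le 1/|\sin(\theta/2)|$ supplies $B$. Here $v_n$ would be $1/(\pi(n-\alpha))$ for $n>\alpha$, which is monotone decreasing to zero.
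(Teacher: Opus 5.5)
Your argument is correct and is the standard Abel-summation proof: the partial sums $U_n$ are bounded, the differences $v_n-v_{n+1}$ are nonnegative and telescope, and $U_Mv_M\to 0$. The paper gives no proof of its own; it simply quotes Dirichlet's test as a classical tool right after stating Abel's summation formula, so this is exactly the intended route.

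One correction concerns only your closing aside. Because $c_n(\alpha)=(-1)^n\sin(\pi\alpha)/(\pi(\alpha-n))$ alternates in sign, the paper's application takes $u_n=(-1)^ne^{jn\theta}=e^{jn(\theta+\pi)}$ with $|U_k|\le 1/\cos(\theta/2)$ and $v_n=1/(n-\alpha)$. The bound therefore degenerates at $\theta=\pm\pi$ (the $\lambda=-1$ point), not at $\theta=0$ as your $1/|\sin(\theta/2)|$ would suggest.
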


\begin{lemma}[Abel's inequality]
	If $\{v_{n}\}_{n\ge L+1}$ is positive and monotone decreasing, and the partial sums of $\{u_{n}\}$ satisfy $|U_{k}|\le K$ for all $k\ge L+1$, then
	\begin{equation}
		\left|\sum_{n=L+1}^{M}u_{n}v_{n}\right|
		\le
		Kv_{L+1}.
	\end{equation}
\end{lemma}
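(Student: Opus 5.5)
The statement to prove is Abel's inequality: if $v_n>0$ is nonincreasing for $n\ge L+1$ and the partial sums satisfy $|U_k|\le K$ for all $k\ge L+1$, then $\left|\sum_{n=L+1}^{M}u_nv_n\right|\le Kv_{L+1}$.

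The plan is to deduce this directly from Abel's summation formula stated just above. That formula rewrites the weighted sum as a combination of the bounded partial sums $U_n$ with nonnegative weights. The weights then telescope, which gives the bound.

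\emph{Step 1 (rewrite).} Take $M>L+1$. The case $M=L+1$ is trivial, since then the sum is $u_{L+1}v_{L+1}=U_{L+1}v_{L+1}$, whose modulus is at most $Kv_{L+1}$. For $M>L+1$, Abel's summation formula gives
\begin{equation*}
\sum_{n=L+1}^{M}u_nv_n=U_Mv_M+\sum_{n=L+1}^{M-1}U_n(v_n-v_{n+1}).
\end{equation*}

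\emph{Step 2 (triangle inequality and nonnegative weights).} Because $\{v_n\}$ is positive and monotone decreasing, we have $v_M>0$ and $v_n-v_{n+1}\ge 0$ for every $n$. The absolute values of the weights are therefore the weights themselves. Applying the triangle inequality together with $|U_k|\le K$ yields
\begin{equation*}
\left|\sum_{n=L+1}^{M}u_nv_n\right|\le K\Bigl(v_M+\sum_{n=L+1}^{M-1}(v_n-v_{n+1})\Bigr).
\end{equation*}

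\emph{Step 3 (telescoping).} The bracketed sum telescopes:
\begin{equation*}
\sum_{n=L+1}^{M-1}(v_n-v_{n+1})=v_{L+1}-v_M.
\end{equation*}
Hence the whole bracket equals $v_{L+1}$, and the bound $Kv_{L+1}$ follows.

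There is no real obstacle here. The only points needing care are the edge case $M=L+1$, which the summation formula as stated does not cover, and the observation that positivity and monotonicity are exactly what make all weights nonnegative. Without that nonnegativity, the telescoping bound would fail.
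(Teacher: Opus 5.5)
Your argument is correct: Abel's summation formula, together with the nonnegativity of the weights $v_M$ and $v_n-v_{n+1}$ (from positivity and monotonicity) and telescoping, gives exactly $Kv_{L+1}$, and you are right to treat $M=L+1$ separately since the formula is only stated for $M>L+1$. The paper gives no proof of this lemma; it lists it as a classical tool right after the summation formula, so your derivation is the standard argument the paper implicitly relies on.
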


\begin{lemma}[Oscillatory partial-sum bound]
	For $\theta\in(-\pi,\pi)$, define
	\begin{equation}
		u_{n}=(-1)^{n}e^{jn\theta}=e^{jn(\theta+\pi)}.
	\end{equation}
	Then, for every $k\ge L+1$,
	\begin{equation}
		\left|\sum_{n=L+1}^{k}u_{n}\right|
		\le
		\frac{1}{\cos(\theta/2)}.
		\label{eqpartial_sum_bound}
	\end{equation}
\end{lemma}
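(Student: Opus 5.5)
The bound follows from the closed form of a finite geometric sum. Set $\phi=\theta+\pi$, so that $u_n=e^{jn\phi}$. Since $\theta\in(-\pi,\pi)$, we have $\phi\in(0,2\pi)$, hence $e^{j\phi}\neq 1$, and the partial sum can be written as
\begin{equation}
\sum_{n=L+1}^{k}e^{jn\phi}
=
e^{j(L+1)\phi}\,\frac{1-e^{j(k-L)\phi}}{1-e^{j\phi}}.
\end{equation}

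Next we bound the modulus. The prefactor $e^{j(L+1)\phi}$ has modulus one. The numerator satisfies $|1-e^{j(k-L)\phi}|\le 2$ by the triangle inequality. For the denominator we use the half-angle identity
\begin{equation}
|1-e^{j\phi}|=2\,|\sin(\phi/2)|.
\end{equation}
Together these give
\begin{equation}
\left|\sum_{n=L+1}^{k}u_n\right|\le \frac{1}{|\sin(\phi/2)|},
\end{equation}
uniformly in $k\ge L+1$ and in $L$.

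It remains to express the bound in terms of $\theta$. We have $\sin(\phi/2)=\sin(\theta/2+\pi/2)=\cos(\theta/2)$. For $\theta\in(-\pi,\pi)$, the angle $\theta/2$ lies in $(-\pi/2,\pi/2)$, so $\cos(\theta/2)>0$ and the absolute value can be dropped. This gives exactly \eqref{eqpartial_sum_bound}.

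The only delicate point is ensuring that the denominator $1-e^{j\phi}$ does not vanish, i.e.\ that $\phi\notin 2\pi\mathbb{Z}$. This is precisely where the exclusion of $\theta=\pi$ (the spectral point $\lambda=-1$) enters, and it is consistent with the exact spectral splitting performed earlier. Everything else is routine trigonometry, so I expect no real obstacle. The bound degenerates as $\theta\to\pm\pi$, which is the expected behavior near the isolated singular point.
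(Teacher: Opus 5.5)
Your proof is correct and follows essentially the same route as the paper. Both write the partial sum as a finite geometric series in $e^{j(\theta+\pi)}\neq 1$, bound the numerator by $2$, and use $|1-e^{j(\theta+\pi)}|=2\left|\sin\left(\frac{\theta+\pi}{2}\right)\right|=2\cos(\theta/2)>0$ for $\theta\in(-\pi,\pi)$.
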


\begin{proof}
	Since $\theta\in(-\pi,\pi)$, one has $\theta+\pi\in(0,2\pi)$ and therefore $e^{j(\theta+\pi)}\neq 1$. The partial sum is thus a geometric sum
	\begin{equation}
		\sum_{n=L+1}^{k}u_{n}
		=
		e^{j(L+1)(\theta+\pi)}
		\frac{1-e^{j(k-L)(\theta+\pi)}}{1-e^{j(\theta+\pi)}}.
	\end{equation}
	Taking absolute values and using $|1-e^{j\phi}|\le 2$ in the numerator, we obtain
	\begin{equation}
		\left|\sum_{n=L+1}^{k}u_{n}\right|
		\le
		\frac{2}{|1-e^{j(\theta+\pi)}|}.
	\end{equation}
	Now
	\begin{equation}
		|1-e^{j(\theta+\pi)}|
		=
		2\left|\sin\left(\frac{\theta+\pi}{2}\right)\right|
		=
		2\cos\left(\frac{\theta}{2}\right),
	\end{equation}
	and $\cos(\theta/2)>0$ for $\theta\in(-\pi,\pi)$. Substituting this identity proves \eqref{eqpartial_sum_bound}.
\end{proof}

\subsubsection{Pointwise Error Bound}

Define the scalar truncation error
\begin{equation}
	R_{L}(\theta)
	=
	e^{j\alpha\theta}
	-
	\sum_{n=-L}^{L} c_{n}(\alpha)e^{jn\theta},
	\qquad \theta\in(-\pi,\pi),
\end{equation}
and split it as
\begin{equation}
	R_{L}(\theta)
	=
	R_{L}^{+}(\theta)+R_{L}^{-}(\theta),
\end{equation}
where
\begin{equation}
	R_{L}^{+}(\theta)
	=
	\sum_{n=L+1}^{\infty} c_{n}(\alpha)e^{jn\theta},
	\quad
	R_{L}^{-}(\theta)
	=
	\sum_{n=-\infty}^{-L-1} c_{n}(\alpha)e^{jn\theta}.
\end{equation}

\begin{theorem}[Pointwise error bound]
	Let $\alpha\in\mathbb{R}\setminus\mathbb{Z}$ and let $L\in\mathbb{N}$ satisfy $L>|\alpha|$. Then, for every $\theta\in(-\pi,\pi)$,
	\begin{equation}
		|R_{L}(\theta)|
		\le
		\frac{|\sin(\pi\alpha)|}{\pi\cos(\theta/2)}
		\left(
		\frac{1}{L+1-\alpha}
		+
		\frac{1}{L+1+\alpha}
		\right).
		\label{eqpointwise_bound_precise}
	\end{equation}
	Equivalently,
	\begin{equation}
		|R_{L}(\theta)|
		\le
		\frac{2(L+1)|\sin(\pi\alpha)|}
		{\pi\cos(\theta/2)\bigl((L+1)^{2}-\alpha^{2}\bigr)}.
		\label{eqpointwise_bound_compact}
	\end{equation}
\end{theorem}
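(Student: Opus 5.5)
The plan is to bound the two tails $R_L^{+}(\theta)$ and $R_L^{-}(\theta)$ separately, using the Oscillatory partial-sum bound together with Abel's inequality, and then add the results. The starting point is the sign identity $\sin(\pi(\alpha-n))=(-1)^n\sin(\pi\alpha)$, valid for every integer $n$. It lets us rewrite each coefficient as
\begin{equation*}
c_n(\alpha)=\frac{\sin(\pi\alpha)}{\pi}\,\frac{(-1)^n}{\alpha-n}.
\end{equation*}
This factors the oscillation out of the coefficients. For the positive tail we therefore get
\begin{equation*}
R_L^{+}(\theta)=-\frac{\sin(\pi\alpha)}{\pi}\sum_{n=L+1}^{\infty}u_n v_n,
\end{equation*}
where $u_n=(-1)^n e^{jn\theta}$ and $v_n=1/(n-\alpha)$.

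Since $L>|\alpha|$, for $n\ge L+1$ we have $n-\alpha\ge L+1-\alpha>0$. Hence $v_n$ is positive, monotone decreasing and tends to zero. By the Oscillatory partial-sum bound, the partial sums of $u_n$ are bounded by $K=1/\cos(\theta/2)$ for every $k\ge L+1$. Dirichlet's test then gives convergence of the tail. Abel's inequality gives $|\sum_{n=L+1}^{M}u_nv_n|\le Kv_{L+1}$ uniformly in $M$, and letting $M\to\infty$ yields
\begin{equation*}
|R_L^{+}(\theta)|\le \frac{|\sin(\pi\alpha)|}{\pi\cos(\theta/2)}\,\frac{1}{L+1-\alpha}.
\end{equation*}

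For the negative tail we substitute $n=-m$ with $m\ge L+1$. Then $c_{-m}(\alpha)=\frac{\sin(\pi\alpha)}{\pi}\frac{(-1)^m}{\alpha+m}$, so $R_L^{-}(\theta)=\frac{\sin(\pi\alpha)}{\pi}\sum_{m\ge L+1}(-1)^m e^{-jm\theta}/(m+\alpha)$. This is the same structure with $\theta$ replaced by $-\theta\in(-\pi,\pi)$ and $v_m=1/(m+\alpha)$. The sequence $v_m$ is positive and decreasing because $m+\alpha\ge L+1+\alpha>0$, again using $L>|\alpha|$. Since $\cos(-\theta/2)=\cos(\theta/2)$, the same argument gives the bound with $1/(L+1+\alpha)$.

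One point needs justification: that $R_L=R_L^{+}+R_L^{-}$ is legitimate. The symmetric partial sums converge to $e^{j\alpha\theta}$ for $\theta\in(-\pi,\pi)$ by the pointwise convergence used in the Proposition on exact spectral representation. Both one-sided tails converge by Dirichlet's test, as shown above. So the remainder of the symmetric partial sum splits as the sum of the two tails. The triangle inequality then gives \eqref{eqpointwise_bound_precise}. The compact form \eqref{eqpointwise_bound_compact} follows from
\begin{equation*}
\frac{1}{L+1-\alpha}+\frac{1}{L+1+\alpha}=\frac{2(L+1)}{(L+1)^2-\alpha^2}.
\end{equation*}

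The only delicate points are the sign bookkeeping in $\sin(\pi(\alpha\mp n))$ and checking that the monotonicity hypotheses of Abel's inequality hold, which is exactly where $L>|\alpha|$ is used. I do not expect either to be a real obstacle.
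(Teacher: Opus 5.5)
Your proposal is correct and follows the paper's proof essentially step for step. Both use the sign identity $\sin(\pi(\alpha-n))=(-1)^n\sin(\pi\alpha)$, bound each tail via the oscillatory partial-sum bound with Dirichlet's test and Abel's inequality (with $L>|\alpha|$ making $1/(n\mp\alpha)$ positive and decreasing), and finish with the triangle inequality. Your explicit justification that $R_L=R_L^{+}+R_L^{-}$ holds, because the symmetric partial sums converge at interior points and each one-sided tail converges separately, is a detail the paper leaves implicit, but it does not change the argument.
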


\begin{proof}
	\textit{Step 1: right-tail bound.}
	For every integer $n$,
	\begin{equation}
		\sin\bigl(\pi(\alpha-n)\bigr)
		=
		(-1)^{n}\sin(\pi\alpha),
	\end{equation}
	hence
	\begin{equation}
		c_{n}(\alpha)
		=
		\frac{(-1)^{n}\sin(\pi\alpha)}{\pi(\alpha-n)}.
	\end{equation}
	Therefore,
	\begin{equation}
		\begin{aligned}
			R_{L}^{+}(\theta)
			&=
			\frac{\sin(\pi\alpha)}{\pi}
			\sum_{n=L+1}^{\infty}
			\frac{(-1)^{n}e^{jn\theta}}{\alpha-n} \\
			&=
			-\frac{\sin(\pi\alpha)}{\pi}
			\sum_{n=L+1}^{\infty}
			\underbrace{(-1)^{n}e^{jn\theta}}_{u_{n}}
			\underbrace{\frac{1}{n-\alpha}}_{v_{n}}.
		\end{aligned}
	\end{equation}
	Because $L>|\alpha|$, one has $n-\alpha>0$ for all $n\ge L+1$. Thus $\{v_{n}\}_{n\ge L+1}$ is positive, monotone decreasing, and converges to zero. By \eqref{eqpartial_sum_bound}, the partial sums of $\{u_{n}\}$ are uniformly bounded by $\cos(\theta/2)^{-1}$. Dirichlet's test guarantees convergence, and Abel's inequality yields
	\begin{equation}
		|R_{L}^{+}(\theta)|
		\le
		\frac{|\sin(\pi\alpha)|}{\pi\cos(\theta/2)}
		\cdot
		\frac{1}{L+1-\alpha}.
		\label{eqright_tail_bound}
	\end{equation}
	
	\textit{Step 2: left-tail bound.}
	Let $m=-n$. Then
	\begin{equation}
		R_{L}^{-}(\theta)
		=
		\frac{\sin(\pi\alpha)}{\pi}
		\sum_{m=L+1}^{\infty}
		\frac{(-1)^{m}e^{-jm\theta}}{m+\alpha}.
	\end{equation}
	Again, because $L>|\alpha|$, the sequence $(m+\alpha)^{-1}$ is positive, monotone decreasing, and converges to zero. Moreover,
	\begin{equation}
		\left|
		\sum_{m=L+1}^{k} (-1)^{m}e^{-jm\theta}
		\right|
		=
		\left|
		\sum_{m=L+1}^{k} e^{jm(\pi-\theta)}
		\right|
		\le
		\frac{1}{\cos(\theta/2)}.
	\end{equation}
	Applying Abel's inequality again gives
	\begin{equation}
		|R_{L}^{-}(\theta)|
		\le
		\frac{|\sin(\pi\alpha)|}{\pi\cos(\theta/2)}
		\cdot
		\frac{1}{L+1+\alpha}.
		\label{eqleft_tail_bound}
	\end{equation}
	
	\textit{Step 3: combine the two tails.}
	By the triangle inequality and \eqref{eqright_tail_bound}--\eqref{eqleft_tail_bound},
	which proves \eqref{eqpointwise_bound_precise}. Combining the two fractions and \eqref{eqpointwise_bound_compact} follows immediately.
\end{proof}

\begin{remark}
	For fixed $\alpha$ and fixed $\theta\in(-\pi,\pi)$, \eqref{eqpointwise_bound_compact} implies
	\begin{equation}
		|R_{L}(\theta)|=O(L^{-1}),
		\qquad L\to\infty.
	\end{equation}
\end{remark}

\subsubsection{Matrix Approximation Error Bound}

Let
$\mathbf{E}_{L}
	=
	\mathbf{F}_{G}^{\alpha}-\mathbf{Q}_{L}^{\alpha}$.
If $r_{-1}=\mathrm{rank}(\mathbf{P}_{-1})=N$, then $\mathbf{F}_{G}=-\mathbf{I}$ and the approximation is exact for every $L$, i.e.,
\begin{equation}
	\mathbf{Q}_{L}^{\alpha}
	=
	e^{j\pi\alpha}\mathbf{I}
	=
	\mathbf{F}_{G}^{\alpha}.
\end{equation}
Hence only the case $r_{-1}<N$ requires further analysis. In this case, define
\begin{equation}
	\cos_{c,\min}
	=
	\min_{k\notin\mathcal{I}_{-1}} |\cos(\theta_{k}/2)| > 0.
\end{equation}

\begin{theorem}[Frobenius-norm error bound]
	Let $\alpha\in\mathbb{R}\setminus\mathbb{Z}$ and let $L>|\alpha|$. If $r_{-1}<N$, then
	\begin{equation}
		\|\mathbf{E}_{L}\|_{F}
		=
		\sqrt{
			\sum_{k\notin\mathcal{I}_{-1}}
			|R_{L}(\theta_{k})|^{2}
		},
		\label{eqfrobenius_error_exact}
	\end{equation}
	and
	\begin{equation}
		\|\mathbf{E}_{L}\|_{F}
		\le
		\frac{2(L+1)|\sin(\pi\alpha)|}
		{\pi\bigl((L+1)^{2}-\alpha^{2}\bigr)}
		\sqrt{
			\sum_{k\notin\mathcal{I}_{-1}}
			\frac{1}{\cos^{2}(\theta_{k}/2)}
		}.
		\label{eqfrobenius_error_tight}
	\end{equation}
	In particular,
	\begin{equation}
		\|\mathbf{E}_{L}\|_{F}
		\le
		\frac{2(L+1)\sqrt{N-r_{-1}}\,|\sin(\pi\alpha)|}
		{\pi\,\cos_{c,\min}\,\bigl((L+1)^{2}-\alpha^{2}\bigr)}.
		\label{eqfrobenius_error_simplified}
	\end{equation}
\end{theorem}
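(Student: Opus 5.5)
The plan is to diagonalize the error matrix in the eigenbasis $\mathbf{V}$ of $\mathbf{F}_G$. Then the Frobenius norm reduces to the scalar remainders $R_L(\theta_k)$, to which the pointwise error bound theorem applies.

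First, I will subtract \eqref{eqFGFRFT_definition} from the spectral definition of $\mathbf{F}_G^\alpha$. As in the proof of the exact spectral representation proposition, write $\mathbf{P}_{-1}$, $\mathbf{P}_c$ and $\mathbf{F}_G^n\mathbf{P}_c$ in the form $\mathbf{V}\,\mathrm{diag}(\cdot)\mathbf{V}^H$. On the $\lambda=-1$ indices the diagonal entry of $\mathbf{F}_G^\alpha$ is $e^{j\pi\alpha}$, and this cancels exactly with the $e^{j\pi\alpha}\mathbf{P}_{-1}$ term. On the complementary indices the entry is $e^{j\alpha\theta_k}-\sum_{n=-L}^{L}c_n(\alpha)e^{jn\theta_k}=R_L(\theta_k)$, with $\theta_k\in(-\pi,\pi)$. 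Hence
\begin{equation*}
\mathbf{E}_L=\mathbf{V}\,\mathrm{diag}\bigl(R_L(\theta_1)\delta_1^{(c)},\dots,R_L(\theta_N)\delta_N^{(c)}\bigr)\mathbf{V}^H .
\end{equation*}
The Frobenius norm is unitarily invariant, so $\|\mathbf{E}_L\|_F^2$ equals the sum of squared moduli of these diagonal entries. This is \eqref{eqfrobenius_error_exact}.

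Next, for each $k\notin\mathcal{I}_{-1}$ I will apply the compact pointwise bound \eqref{eqpointwise_bound_compact}, which is valid because $L>|\alpha|$ and $\theta_k\in(-\pi,\pi)$. Squaring and summing over $k$ lets me factor out the $k$-independent constant $2(L+1)|\sin(\pi\alpha)|/(\pi((L+1)^2-\alpha^2))$. Taking square roots then gives \eqref{eqfrobenius_error_tight}. Here $\cos(\theta_k/2)>0$, so its square equals $|\cos(\theta_k/2)|^2$ and nothing is lost.

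Finally, for \eqref{eqfrobenius_error_simplified} I will bound each term $1/\cos^2(\theta_k/2)$ by $1/\cos_{c,\min}^2$. There are exactly $N-r_{-1}$ complementary indices, because $r_{-1}=|\mathcal{I}_{-1}|$ is the rank of the diagonal projector. The quantity $\cos_{c,\min}$ is positive since the index set is finite and nonempty (this uses $r_{-1}<N$) and each $\theta_k$ lies strictly inside $(-\pi,\pi)$.

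There is no real obstacle. The only point requiring care is the first step: checking that the $\lambda=-1$ block cancels exactly, so that no midpoint value $\cos(\pi\alpha)$ appears in the error. This holds because $\mathbf{P}_{-1}$ and $\mathbf{P}_c$ are diagonal in the same basis $\mathbf{V}$.
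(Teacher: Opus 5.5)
Your proposal is correct and follows the paper's proof essentially step for step. You diagonalize $\mathbf{E}_L$ in the eigenbasis $\mathbf{V}$ with the $\lambda=-1$ block cancelling exactly, invoke unitary invariance of the Frobenius norm, and insert the compact pointwise bound termwise before bounding each $1/\cos^2(\theta_k/2)$ by $1/\cos_{c,\min}^{2}$ over the $N-r_{-1}$ complementary indices. The one cosmetic difference is that you read the complementary diagonal entries directly from the definition of $R_L(\theta_k)$, rather than passing through the infinite-series representation as the paper does; this is, if anything, slightly cleaner.
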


\begin{proof}
	From the exact representation \eqref{eqexact_split_representation} and the truncated definition \eqref{eqFGFRFT_definition},
	\begin{equation}
		\mathbf{E}_{L}
		=
		\mathbf{V}\,\mathrm{diag}\!\bigl(
		R_{L}(\theta_{1})\delta_{1}^{(c)},\dots,
		R_{L}(\theta_{N})\delta_{N}^{(c)}
		\bigr)\mathbf{V}^{H}.
	\end{equation}
	Indeed, the $\lambda=-1$ component is treated exactly in both $\mathbf{F}_{G}^{\alpha}$ and $\mathbf{Q}_{L}^{\alpha}$, so it contributes no truncation error. Since the Frobenius norm is unitarily invariant,
	\begin{equation}
		\begin{aligned}
			\|\mathbf{E}_{L}\|_{F}
			&=
			\left\|
			\mathrm{diag}\!\bigl(
			R_{L}(\theta_{1})\delta_{1}^{(c)},\dots,
			R_{L}(\theta_{N})\delta_{N}^{(c)}
			\bigr)
			\right\|_{F} \\
			&=
			\sqrt{
				\sum_{k=1}^{N}
				|R_{L}(\theta_{k})|^{2}\,|\delta_{k}^{(c)}|^{2}
			} \\
			&=
			\sqrt{
				\sum_{k\notin\mathcal{I}_{-1}}
				|R_{L}(\theta_{k})|^{2}
			},
		\end{aligned}
	\end{equation}
	which proves \eqref{eqfrobenius_error_exact}. Substituting the pointwise bound \eqref{eqpointwise_bound_compact} into each term yields
	\begin{equation}
		\begin{aligned}
			\|\mathbf{E}_{L}\|_{F}
			&\le
			\sqrt{
				\sum_{k\notin\mathcal{I}_{-1}}
				\left(
				\frac{2(L+1)|\sin(\pi\alpha)|}
				{\pi|\cos(\theta_{k}/2)|\bigl((L+1)^{2}-\alpha^{2}\bigr)}
				\right)^{2}
			} \\
			&=
			\frac{2(L+1)|\sin(\pi\alpha)|}
			{\pi\bigl((L+1)^{2}-\alpha^{2}\bigr)}
			\sqrt{
				\sum_{k\notin\mathcal{I}_{-1}}
				\frac{1}{\cos^{2}(\theta_{k}/2)}
			},
		\end{aligned}
	\end{equation}
	which proves \eqref{eqfrobenius_error_tight}. Finally, since
	\begin{equation}
		|\cos(\theta_{k}/2)|\ge \cos_{c,\min},
		\qquad k\notin\mathcal{I}_{-1},
	\end{equation}
	we have
	\begin{equation}
		\sum_{k\notin\mathcal{I}_{-1}}
		\frac{1}{\cos^{2}(\theta_{k}/2)}
		\le
		\frac{N-r_{-1}}{\cos_{c,\min}^{2}},
	\end{equation}
	which gives \eqref{eqfrobenius_error_simplified}.
\end{proof}

\subsection{Approximate Unitarity and Reconstruction Error}

We now show that the approximation error bound implies approximate unitarity of $\mathbf{Q}_{L}^{\alpha}$ and, consequently, a reconstruction guarantee for the approximate inverse FGFRFT.

\begin{proposition}[Approximate unitarity]
	Let
	\begin{equation}
		\epsilon_{L}=\|\mathbf{F}_{G}^{\alpha}-\mathbf{Q}_{L}^{\alpha}\|_{F}.
	\end{equation}
	Then
	\begin{equation}
		\|(\mathbf{Q}_{L}^{\alpha})^{H}\mathbf{Q}_{L}^{\alpha}-\mathbf{I}\|_{F}
		\le
		2\epsilon_{L}+\epsilon_{L}^{2}.
		\label{eqapprox_unitarity}
	\end{equation}
\end{proposition}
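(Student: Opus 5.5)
The argument is a perturbation of the exact unitarity of $\mathbf{F}_G^{\alpha}$. Write $\mathbf{E}_L=\mathbf{F}_G^{\alpha}-\mathbf{Q}_L^{\alpha}$, so that $\mathbf{Q}_L^{\alpha}=\mathbf{F}_G^{\alpha}-\mathbf{E}_L$. First I would record that $\mathbf{F}_G^{\alpha}$ is unitary for real $\alpha$. This follows from the spectral definition: $\mathbf{F}_G^{\alpha}=\mathbf{V}\,\mathrm{diag}(e^{j\alpha\theta_k})\mathbf{V}^H$ has unimodular diagonal entries and a unitary $\mathbf{V}$, so $(\mathbf{F}_G^{\alpha})^H\mathbf{F}_G^{\alpha}=\mathbf{I}$.

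Next, expand the product:
\begin{equation*}
(\mathbf{Q}_L^{\alpha})^H\mathbf{Q}_L^{\alpha}-\mathbf{I}
=-(\mathbf{F}_G^{\alpha})^H\mathbf{E}_L-\mathbf{E}_L^H\mathbf{F}_G^{\alpha}+\mathbf{E}_L^H\mathbf{E}_L .
\end{equation*}
Then apply the triangle inequality in the Frobenius norm to the three terms. The Frobenius norm is unitarily invariant, so $\|(\mathbf{F}_G^{\alpha})^H\mathbf{E}_L\|_F=\|\mathbf{E}_L^H\mathbf{F}_G^{\alpha}\|_F=\|\mathbf{E}_L\|_F=\epsilon_L$. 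For the quadratic term, submultiplicativity together with $\|\mathbf{E}_L^H\|_F=\|\mathbf{E}_L\|_F$ gives $\|\mathbf{E}_L^H\mathbf{E}_L\|_F\le\epsilon_L^2$. Summing the three bounds yields $2\epsilon_L+\epsilon_L^2$.

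A sharper route is also available and confirms the estimate. By the matrix approximation error analysis, $\mathbf{E}_L=\mathbf{V}\,\mathrm{diag}(R_L(\theta_k)\delta_k^{(c)})\mathbf{V}^H$, so all matrices here are simultaneously diagonalized by $\mathbf{V}$. The diagonal entries of the left-hand side are $|e^{j\alpha\theta_k}-r_k|^2-1$ with $r_k=R_L(\theta_k)\delta_k^{(c)}$, and each is bounded by $2|r_k|+|r_k|^2$. This representation is not needed, however, since the general argument works for any $\mathbf{Q}$ and does not use the structure of $\mathbf{Q}_L^{\alpha}$.

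The only point needing care is the quadratic term, where I must use submultiplicativity $\|\mathbf{A}\mathbf{B}\|_F\le\|\mathbf{A}\|_F\|\mathbf{B}\|_F$ (or the sharper $\|\mathbf{A}\|_2\|\mathbf{B}\|_F$). No real obstacle is expected, and the case $\alpha\in\mathbb{Z}$ is trivial because then $\epsilon_L=0$.
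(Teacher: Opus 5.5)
Your proposal is correct and follows essentially the same argument as the paper: write $\mathbf{Q}_L^{\alpha}$ as a perturbation of the unitary $\mathbf{F}_G^{\alpha}$, expand $(\mathbf{Q}_L^{\alpha})^H\mathbf{Q}_L^{\alpha}-\mathbf{I}$ into two linear cross terms and one quadratic term, and bound these by $\epsilon_L$, $\epsilon_L$, and $\epsilon_L^2$ via the triangle inequality. The differences are only cosmetic. The paper uses $\|\mathbf{A}\mathbf{B}\|_F\le\|\mathbf{A}\|_2\|\mathbf{B}\|_F$ together with $\|\mathbf{F}_G^{\alpha}\|_2=1$ and $\|\mathbf{\Delta}\|_2\le\|\mathbf{\Delta}\|_F$, whereas you use unitary invariance and Frobenius submultiplicativity, which lead to the same bounds.
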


\begin{proof}
	Let
	\begin{equation}
		\mathbf{Q}_{L}^{\alpha}
		=
		\mathbf{F}_{G}^{\alpha}+\mathbf{\Delta},
		\qquad
		\|\mathbf{\Delta}\|_{F}=\epsilon_{L}.
	\end{equation}
	Since $\mathbf{F}_{G}^{\alpha}$ is unitary,
	\begin{equation}
		(\mathbf{F}_{G}^{\alpha})^{H}\mathbf{F}_{G}^{\alpha}=\mathbf{I}.
	\end{equation}
	Therefore,
	\begin{equation}
		\begin{aligned}
			(\mathbf{Q}_{L}^{\alpha})^{H}\mathbf{Q}_{L}^{\alpha}-\mathbf{I}
			&=
			(\mathbf{F}_{G}^{\alpha}+\mathbf{\Delta})^{H}
			(\mathbf{F}_{G}^{\alpha}+\mathbf{\Delta})
			-\mathbf{I} \\
			&=
			(\mathbf{F}_{G}^{\alpha})^{H}\mathbf{\Delta}
			+
			\mathbf{\Delta}^{H}\mathbf{F}_{G}^{\alpha}
			+
			\mathbf{\Delta}^{H}\mathbf{\Delta}.
		\end{aligned}
	\end{equation}
	Applying the triangle inequality together with the submultiplicative estimates
	\begin{equation}
		\|\mathbf{A}\mathbf{B}\|_{F}\le \|\mathbf{A}\|_{2}\|\mathbf{B}\|_{F},
		\qquad
		\|\mathbf{A}\mathbf{B}\|_{F}\le \|\mathbf{A}\|_{F}\|\mathbf{B}\|_{2},
	\end{equation}
	and using $\|\mathbf{F}_{G}^{\alpha}\|_{2}=1$ and $\|\mathbf{\Delta}\|_{2}\le \|\mathbf{\Delta}\|_{F}=\epsilon_{L}$, we obtain
	\begin{equation}
		\|(\mathbf{Q}_{L}^{\alpha})^{H}\mathbf{Q}_{L}^{\alpha}-\mathbf{I}\|_{F}
		\le
		\epsilon_{L}+\epsilon_{L}+\epsilon_{L}^{2}
		=
		2\epsilon_{L}+\epsilon_{L}^{2}.
	\end{equation}
	This proves \eqref{eqapprox_unitarity}.
\end{proof}

\begin{proposition}[Reconstruction error bound]
	Let
	\begin{equation}
		\widehat{\mathbf{x}}
		=
		(\mathbf{Q}_{L}^{\alpha})^{H}\mathbf{Q}_{L}^{\alpha}\mathbf{x}
	\end{equation}
	be the approximate reconstruction of $\mathbf{x}$. Then
	\begin{equation}
		\|\widehat{\mathbf{x}}-\mathbf{x}\|_{2}
		\le
		(2\epsilon_{L}+\epsilon_{L}^{2})\|\mathbf{x}\|_{2}.
		\label{eqreconstruction_error}
	\end{equation}
\end{proposition}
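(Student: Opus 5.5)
The plan is to reduce the reconstruction error directly to the approximate-unitarity bound \eqref{eqapprox_unitarity}. By definition, $\widehat{\mathbf{x}}-\mathbf{x}=\bigl((\mathbf{Q}_{L}^{\alpha})^{H}\mathbf{Q}_{L}^{\alpha}-\mathbf{I}\bigr)\mathbf{x}$. Taking Euclidean norms gives
\begin{equation*}
\|\widehat{\mathbf{x}}-\mathbf{x}\|_{2}\le \|(\mathbf{Q}_{L}^{\alpha})^{H}\mathbf{Q}_{L}^{\alpha}-\mathbf{I}\|_{2}\,\|\mathbf{x}\|_{2}.
\end{equation*}
This is the standard consistency of the spectral norm with the vector $2$-norm.

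Next, use $\|\mathbf{M}\|_{2}\le\|\mathbf{M}\|_{F}$ for any matrix $\mathbf{M}$. This holds because the largest singular value is at most the root-sum-of-squares of all singular values. Combined with the approximate-unitarity proposition, it yields
\begin{equation*}
\|(\mathbf{Q}_{L}^{\alpha})^{H}\mathbf{Q}_{L}^{\alpha}-\mathbf{I}\|_{2}\le 2\epsilon_{L}+\epsilon_{L}^{2}.
\end{equation*}
Substituting this into the previous display gives \eqref{eqreconstruction_error}. For integer $\alpha$, $\mathbf{Q}_L^\alpha=\mathbf{F}_G^\alpha$ is unitary and $\epsilon_L=0$, so the bound holds trivially and no case split is needed.

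There is no genuine obstacle; the only point to state explicitly is the norm comparison $\|\cdot\|_{2}\le\|\cdot\|_{F}$. Optionally, one could remark that the bound can be sharpened. Because $\mathbf{E}_{L}$ is diagonal in the basis $\mathbf{V}$, one can work with $\|\mathbf{E}_L\|_2=\max_{k\notin\mathcal{I}_{-1}}|R_L(\theta_k)|$ in place of $\epsilon_L$. That refinement is not required for the stated result.
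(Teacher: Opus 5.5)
Your proof is correct and matches the paper's argument step for step: write $\widehat{\mathbf{x}}-\mathbf{x}=\bigl((\mathbf{Q}_{L}^{\alpha})^{H}\mathbf{Q}_{L}^{\alpha}-\mathbf{I}\bigr)\mathbf{x}$, bound it by the spectral norm, pass to the Frobenius norm via $\|\cdot\|_{2}\le\|\cdot\|_{F}$, and invoke the approximate-unitarity bound. Your optional sharpening is also valid, replacing $\epsilon_L$ by $\max_{k\notin\mathcal{I}_{-1}}|R_L(\theta_k)|$: $\mathbf{E}_L$ is diagonalized by $\mathbf{V}$, and the approximate-unitarity estimate carries over verbatim with spectral norms in place of Frobenius norms.
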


\begin{proof}
	By definition,
	\begin{equation}
		\widehat{\mathbf{x}}-\mathbf{x}
		=
		\bigl[(\mathbf{Q}_{L}^{\alpha})^{H}\mathbf{Q}_{L}^{\alpha}-\mathbf{I}\bigr]\mathbf{x}.
	\end{equation}
	Hence,
	\begin{equation}
		\begin{aligned}
			\|\widehat{\mathbf{x}}-\mathbf{x}\|_{2}
			&=
			\left\|
			\bigl[(\mathbf{Q}_{L}^{\alpha})^{H}\mathbf{Q}_{L}^{\alpha}-\mathbf{I}\bigr]\mathbf{x}
			\right\|_{2} \\
			&\le
			\|(\mathbf{Q}_{L}^{\alpha})^{H}\mathbf{Q}_{L}^{\alpha}-\mathbf{I}\|_{2}\,\|\mathbf{x}\|_{2} \\
			&\le
			\|(\mathbf{Q}_{L}^{\alpha})^{H}\mathbf{Q}_{L}^{\alpha}-\mathbf{I}\|_{F}\,\|\mathbf{x}\|_{2}.
		\end{aligned}
	\end{equation}
	Applying \eqref{eqapprox_unitarity} proves \eqref{eqreconstruction_error}.
\end{proof}

\subsection{Approximate Additivity}

We further show that the proposed FGFRFT approximately preserves the additive composition law of the exact GFRFT.

\begin{proposition}[Approximate additivity]
	For $\alpha,\beta\in\mathbb{R}$, define
	\begin{equation}
		\mathbf{\Xi}_{L}(\alpha,\beta)
		=
		\mathbf{Q}_{L}^{\alpha}\mathbf{Q}_{L}^{\beta}
		-
		\mathbf{Q}_{L}^{\alpha+\beta}.
		\label{eqapprox_additivity_defect}
	\end{equation}
	Let
	\begin{equation}
		\epsilon_{L}(\gamma)
		=
		\|\mathbf{F}_{G}^{\gamma}-\mathbf{Q}_{L}^{\gamma}\|_{F},
		\qquad \gamma\in\mathbb{R}.
	\end{equation}
	Then
	\begin{equation}
		\|\mathbf{\Xi}_{L}(\alpha,\beta)\|_{F}
		\le
		\epsilon_{L}(\alpha+\beta)
		+
		\epsilon_{L}(\alpha)
		+
		\epsilon_{L}(\beta)
		+
		\epsilon_{L}(\alpha)\epsilon_{L}(\beta).
		\label{eqapprox_additivity_bound}
	\end{equation}
\end{proposition}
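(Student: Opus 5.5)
We will compare everything to the exact GFRFT and use its exact additivity, $\mathbf{F}_G^{\alpha}\mathbf{F}_G^{\beta}=\mathbf{F}_G^{\alpha+\beta}$. This holds because both matrices are diagonal in the same basis $\mathbf{V}$ with entries $e^{j\alpha\theta_k}$ and $e^{j\beta\theta_k}$, whose product is $e^{j(\alpha+\beta)\theta_k}$ under the fixed principal phases $\theta_k$. Write $\mathbf{Q}_L^{\gamma}=\mathbf{F}_G^{\gamma}+\mathbf{\Delta}_{\gamma}$ with $\|\mathbf{\Delta}_{\gamma}\|_F=\epsilon_L(\gamma)$, for $\gamma\in\{\alpha,\beta,\alpha+\beta\}$. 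For integer $\gamma$ we have $\mathbf{\Delta}_{\gamma}=\mathbf{0}$ by the convention $\mathbf{Q}_L^{\gamma}=\mathbf{F}_G^{\gamma}$, so no case split is needed.

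Expanding the defect \eqref{eqapprox_additivity_defect} gives
\begin{equation}
\mathbf{\Xi}_L(\alpha,\beta)=\bigl(\mathbf{F}_G^{\alpha}\mathbf{F}_G^{\beta}-\mathbf{F}_G^{\alpha+\beta}\bigr)+\mathbf{F}_G^{\alpha}\mathbf{\Delta}_{\beta}+\mathbf{\Delta}_{\alpha}\mathbf{F}_G^{\beta}+\mathbf{\Delta}_{\alpha}\mathbf{\Delta}_{\beta}-\mathbf{\Delta}_{\alpha+\beta}.
\end{equation}
The first bracket vanishes by exact additivity. We then bound the remaining four terms separately using the triangle inequality. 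The mixed products are handled as in the proof of approximate unitarity: $\|\mathbf{A}\mathbf{B}\|_F\le\|\mathbf{A}\|_2\|\mathbf{B}\|_F$ together with $\|\mathbf{F}_G^{\alpha}\|_2=\|\mathbf{F}_G^{\beta}\|_2=1$, which holds since these are unitary. This gives $\|\mathbf{F}_G^{\alpha}\mathbf{\Delta}_\beta\|_F\le\epsilon_L(\beta)$ and $\|\mathbf{\Delta}_\alpha\mathbf{F}_G^{\beta}\|_F\le\epsilon_L(\alpha)$. For the quadratic term, $\|\mathbf{\Delta}_\alpha\mathbf{\Delta}_\beta\|_F\le\|\mathbf{\Delta}_\alpha\|_F\|\mathbf{\Delta}_\beta\|_2\le\epsilon_L(\alpha)\epsilon_L(\beta)$. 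Summing the four bounds yields \eqref{eqapprox_additivity_bound}.

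The only delicate point is the exact additivity identity for real, non-integer orders under the principal branch. If $\alpha+\beta$ pushes $(\alpha+\beta)\theta_k$ outside $(-\pi,\pi]$, one might worry about a branch mismatch. However, the definition of $\mathbf{F}_G^{\gamma}$ fixes $\theta_k\in(-\pi,\pi]$ once and uses $e^{j\gamma\theta_k}$ for every $\gamma$, with no re-reduction of $\gamma\theta_k$. Hence the identity holds entrywise without any branch issue, and this is the observation I would state explicitly. Note also that $\mathbf{Q}_L^{\alpha}$, $\mathbf{Q}_L^{\beta}$ and $\mathbf{F}_G^{\gamma}$ all commute, since they share the basis $\mathbf{V}$. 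This is not needed for the bound, but it confirms that the order of factors is irrelevant.
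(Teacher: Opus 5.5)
Your proof is correct and follows the paper's argument essentially step for step. You write $\mathbf{Q}_L^{\gamma}=\mathbf{F}_G^{\gamma}+\mathbf{\Delta}_{\gamma}$, cancel the leading term by exact index additivity, and bound the remaining terms with the mixed Frobenius/spectral norm inequalities and $\|\mathbf{F}_G^{\gamma}\|_2=1$. Your explicit remark that additivity holds because the phases $\theta_k$ are fixed once in $(-\pi,\pi]$, with no re-reduction of $\gamma\theta_k$, is a useful clarification that the paper leaves implicit.
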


\begin{proof}
	Let
		$\mathbf{Q}_{L}^{\alpha}
		=
		\mathbf{F}_{G}^{\alpha}+\mathbf{\Delta}_{\alpha}$,
		$\mathbf{Q}_{L}^{\beta}
		=
		\mathbf{F}_{G}^{\beta}+\mathbf{\Delta}_{\beta}$,
		$\mathbf{Q}_{L}^{\alpha+\beta}
		=
		\mathbf{F}_{G}^{\alpha+\beta}+\mathbf{\Delta}_{\alpha+\beta}$,
	where $\|\mathbf{\Delta}_{\alpha}\|_{F}=\epsilon_{L}(\alpha),
		\|\mathbf{\Delta}_{\beta}\|_{F}=\epsilon_{L}(\beta),
		\|\mathbf{\Delta}_{\alpha+\beta}\|_{F}=\epsilon_{L}(\alpha+\beta)$.
	Since the exact GFRFT satisfies index additivity,
	we have
	\begin{equation}
		\begin{aligned}
			\mathbf{\Xi}_{L}(\alpha,\beta)
			&=
			(\mathbf{F}_{G}^{\alpha}+\mathbf{\Delta}_{\alpha})
			(\mathbf{F}_{G}^{\beta}+\mathbf{\Delta}_{\beta})
			-
			(\mathbf{F}_{G}^{\alpha+\beta}+\mathbf{\Delta}_{\alpha+\beta}) \\
			&=
			\mathbf{F}_{G}^{\alpha}\mathbf{\Delta}_{\beta}
			+
			\mathbf{\Delta}_{\alpha}\mathbf{F}_{G}^{\beta}
			+
			\mathbf{\Delta}_{\alpha}\mathbf{\Delta}_{\beta}
			-
			\mathbf{\Delta}_{\alpha+\beta}.
		\end{aligned}
	\end{equation}
	Applying the triangle inequality and the same norm estimates used in the proof of Proposition~5, gives
	\begin{equation}
		\|\mathbf{\Xi}_{L}(\alpha,\beta)\|_{F}
		\le
		\epsilon_{L}(\alpha+\beta)
		+
		\epsilon_{L}(\alpha)
		+
		\epsilon_{L}(\beta)
		+
		\epsilon_{L}(\alpha)\epsilon_{L}(\beta),
	\end{equation}
	which proves \eqref{eqapprox_additivity_bound}.
\end{proof}

\begin{remark}
	Setting $\beta=-\alpha$ in \eqref{eqapprox_additivity_defect} and using \eqref{eqadjoint_symmetry}, the approximate unitarity result \eqref{eqapprox_unitarity} can be viewed as a special case of the approximate additivity error.
\end{remark}

\subsection{Complexity Analysis}

We finally compare the computational complexity of the proposed FGFRFT with that of the traditional GFRFT. The comparison concerns the construction of the full dense transform operator, rather than merely the application of the operator to a single signal. Assume the graph has $N$ nodes and the truncation order is $L$.

\subsubsection{Traditional GFRFT}

\paragraph{Offline phase}
A spectral decomposition of the unitary GFT matrix $\mathbf{F}_{G}$ is performed once to obtain $\mathbf{V}$, $\mathbf{V}^{H}$, and the diagonal eigenvalue matrix $\mathbf{\Sigma}$. This step costs $O(N^{3})$.

\paragraph{Online phase}
For a new order $\alpha$, the scalar powers in $\mathbf{\Sigma}^{\alpha}$ can be computed in $O(N)$. However, reconstructing the dense matrix
	$\mathbf{F}_{G}^{\alpha}
	=
	\mathbf{V}\mathbf{\Sigma}^{\alpha}\mathbf{V}^{H}$
still requires dense matrix-matrix multiplication, and the dominant complexity remains $O(N^{3})$.

\subsubsection{Proposed FGFRFT}

\paragraph{Offline phase}
The eigendecomposition of $\mathbf{F}_{G}$ is performed once. The projector $\mathbf{P}_{-1}$ is then formed from the spectral decomposition, and the complementary projector $\mathbf{P}_{c}=\mathbf{I}-\mathbf{P}_{-1}$ is obtained. Next, the matrices $\mathcal{C}=\{\mathbf{F}_{G}^{n}\mathbf{P}_{c}\}_{n=1}^{L}$
are recursively precomputed and cached. The offline cost is $O(LN^{3})$, and the storage cost is $O(LN^{2})$.

\paragraph{Online phase}
For each updated order $\alpha$, one evaluates the scalar coefficients $c_{n}(\alpha)$ and the scalar factor $e^{j\pi\alpha}$, and forms the linear combination. Since scalar-matrix multiplication and matrix addition each cost $O(N^{2})$, the online cost is $O(2LN^{2})$, which is significantly lower than $O(N^{3})$ when $L\ll N$.

Therefore, the proposed FGFRFT achieves acceleration through an offline-online tradeoff. Its main advantage arises in scenarios where the same graph is queried repeatedly for multiple transform orders, so that the offline cost can be amortized over many online updates.

To clearly demonstrate this, we compare the computational
complexity of the different algorithms in
Table~\ref{tabcomplexity}, compare the cost of constructing the full dense transform operator, not merely applying the transform to a single signal and provide the specific
implementation pseudocode of the proposed FGFRFT in
Algorithm~\ref{algffgfrft}, detailing the complete
steps from offline caching to online linear
combination. 

\begin{table}[htbp]
	\centering
	\caption{Computational complexity comparison for offline pre-computation and online operator construction}
	\label{tabcomplexity}
	\renewcommand{\arraystretch}{1.0}
	\resizebox{\linewidth}{!}{
		\begin{tabular}{cccc}
			\toprule[1.5pt]
			\textbf{Algorithm} & \textbf{Pre-computation} & \textbf{Order Update} & \textbf{Dominant Operation} \\
			\midrule
			\makecell{Traditional \\ GFRFT}
			&
			\makecell{Eigen-decomposition \\ $O(N^{3})$}
			&
			$O(N^{3})$
			&
			\makecell{Matrix-Matrix \\ Multiplication} \\
			\midrule
			\makecell{Proposed \\ FGFRFT}
			&
			\makecell{Projector and matrix powers \\ $O(LN^{3})$}
			&
			$O(2LN^{2})$
			&
			\makecell{Scalar-Matrix \\ Linear Combination} \\
			\bottomrule[1.5pt]
	\end{tabular}}
\end{table}

\begin{algorithm}[!t]
	\caption{Offline--online construction of the FGFRFT operator}
	\label{algffgfrft}
	\footnotesize
	\renewcommand{\baselinestretch}{0.9}
	\setlength{\algorithmicindent}{1em}
	\begin{algorithmic}[1]
		\REQUIRE Unitary GFT matrix $\mathbf{F}_G\in\mathbb{C}^{N\times N}$, transform order $\alpha\in\mathbb{R}$, truncation order $L$
		\ENSURE Approximate operator $\mathbf{Q}_L^\alpha \approx \mathbf{F}_G^\alpha$
		
		\STATE \textbf{Offline pre-computation}
		\STATE Compute $\mathbf{F}_G=\mathbf{V}\,\mathrm{diag}(e^{j\theta_1},\dots,e^{j\theta_N})\mathbf{V}^H$
		\STATE Identify $\mathcal{I}_{-1}=\{k:\theta_k=\pi\}$
		\STATE Form $\mathbf{W}$ by collecting the orthonormal eigenvectors associated with $\lambda=-1$
		\STATE Set $\mathbf{P}_{-1}\leftarrow \mathbf{W}\mathbf{W}^H$ and $\mathbf{P}_c\leftarrow \mathbf{I}-\mathbf{P}_{-1}$
		\STATE Set $\mathbf{M}_0\leftarrow \mathbf{P}_c$
		\FOR{$n=1$ to $L$}
		\STATE $\mathbf{M}_n\leftarrow \mathbf{F}_G\mathbf{M}_{n-1}$ 
		\ENDFOR
		
		\STATE \textbf{Online construction}
		\IF{$\alpha\in\mathbb{Z}$}
		\RETURN $\mathbf{F}_G^\alpha$
		\ENDIF
		\STATE Initialize $\mathbf{Q}_L^\alpha \leftarrow e^{j\pi\alpha}\mathbf{P}_{-1}+c_0(\alpha)\mathbf{P}_c$
		\FOR{$n=1$ to $L$}
		\STATE Compute $c_n(\alpha)=\dfrac{\sin(\pi(\alpha-n))}{\pi(\alpha-n)}$ and $c_{-n}(\alpha)=\dfrac{\sin(\pi(\alpha+n))}{\pi(\alpha+n)}$
		\STATE Update $\mathbf{Q}_L^\alpha \leftarrow \mathbf{Q}_L^\alpha + c_n(\alpha)\mathbf{M}_n + c_{-n}(\alpha)\mathbf{M}_n^H$
		\ENDFOR
		\RETURN $\mathbf{Q}_L^\alpha$
	\end{algorithmic}
\end{algorithm}

\section{Analysis of Approximation Accuracy and Computational Efficiency}

This section provides a systematic evaluation of the proposed FGFRFT in terms of approximation accuracy and computational efficiency. Using the exact GFRFT obtained from eigendecomposition as the baseline, we first introduce the experimental setup, performance metrics, and the theoretical predictions used throughout this section, including both the rigorous worst-case error bound and the Parseval-based typical-case decay law. We then verify the approximate uniform phase distribution of random unitary GFT matrices, which supports the use of the Parseval-based prediction in the random-unitary setting, and analyze the effect of the truncation order on approximation accuracy. Finally, we compare the online computational cost of FGFRFT with that of the exact GFRFT, study the robustness of the proposed approximation under perturbed inputs, and provide a controlled validation of the exact treatment of the spectral point $\lambda=-1$.

\subsection{Experimental Setup and Evaluation Metrics}

\subsubsection{Software and Hardware Environment}

All experiments are conducted in MATLAB R2024a on a 12th Gen Intel(R) Core(TM) i5-12600KF processor (3.70 GHz). Random seeds are fixed globally to ensure full reproducibility. For each node size $N$, a single random unitary matrix is generated via QR decomposition of a complex Gaussian random matrix and reused across all $(\alpha,L)$ combinations, so that timing and accuracy results are measured on identical instances. The standard GFRFT computed from eigendecomposition is used as the exact baseline, while the proposed FGFRFT is evaluated using the construction developed in Section~III. 

\subsubsection{Evaluation Metrics}
The approximation accuracy is evaluated using mean squared error (MSE), mean absolute error (MAE), and the normalized mean squared error (NMSE)
\begin{equation}
	\mathrm{MSE}
	=
	\frac{1}{N^{2}}\|\mathbf{E}_{L}\|_{F}^{2},
\end{equation}
\begin{equation}
	\mathrm{MAE}
	=
	\frac{1}{N^{2}}
	\sum_{i,j}\bigl|[\mathbf{E}_{L}]_{ij}\bigr|,
\end{equation}
\begin{equation}
	\mathrm{NMSE}
	=
	\frac{1}{N}\|\mathbf{E}_{L}\|_{F}^{2}.
\end{equation}

These three metrics satisfy the identities
$\mathrm{MSE}=\frac{\mathrm{NMSE}}{N}$,
and $\mathrm{MAE}\le \sqrt{\mathrm{MSE}}$,
where the latter follows from the Cauchy--Schwarz inequality. Among them, NMSE is scale-free and measures the relative approximation quality, MSE is additionally normalized by the graph size through $\mathrm{MSE}=\mathrm{NMSE}/N$ and thus exhibits explicit $N$-dependence, and MAE measures the average element-wise absolute deviation.
\subsubsection{Theoretical Predictions}

\paragraph{Worst-case upper bound}

The Frobenius-norm error bound derived in Section~III immediately implies a rigorous worst-case bound for the NMSE. If $r_{-1}=\mathrm{rank}(\mathbf{P}_{-1})=N$, the approximation is exact for every $L$, so that $\mathrm{NMSE}=0$.
Otherwise, when $r_{-1}<N$, the operator error satisfies
\begin{equation}\label{eq:worst_case_bound}
	\begin{aligned}
		\mathrm{NMSE}
		&=
		\frac{1}{N}\|\mathbf{E}_{L}\|_{F}^{2} \\
		&\le
		\frac{4(L+1)^{2}\sin^{2}(\pi\alpha)}
		{\pi^{2}\bigl((L+1)^{2}-\alpha^{2}\bigr)^{2}}
		\cdot
		\frac{N-r_{-1}}{N\,\cos_{c,\min}^{2}},
	\end{aligned}
\end{equation}
and consequently $\mathrm{MSE}=\frac{\mathrm{NMSE}}{N}$, $\mathrm{MAE}\le \sqrt{\mathrm{MSE}}$.

This bound follows directly from the FGFRFT error analysis in Section~III. Since the spectral component associated with $\lambda=-1$ is treated exactly, it contributes no truncation error; the worst-case constant is governed only by the complementary spectral samples. In practice, however, the quantity $\cos_{c,\min}$ can still be very small when some complementary eigenphases lie close to $\pm\pi$, so the bound may remain conservative for random unitary matrices.

\paragraph{Typical-case prediction via Parseval's identity}

A sharper characterization of the typical behavior can be obtained from Parseval's identity. Recall that the truncation error is contributed only by the complementary spectral samples. By Parseval's theorem,
\begin{equation}\label{eqparseval_identity}
	\frac{1}{2\pi}
	\int_{-\pi}^{\pi}
	|R_{L}(\theta)|^{2}\,d\theta
	=
	\sum_{|n|>L}|c_{n}(\alpha)|^{2}.
\end{equation}

On the other hand,
\begin{equation}\label{eqnmse_discrete}
	\mathrm{NMSE}
	=
	\frac{1}{N}
	\sum_{k\notin\mathcal{I}_{-1}}
	|R_{L}(\theta_{k})|^{2}.
\end{equation}
Hence, if the complementary eigenphase set
$\{\theta_kk\notin\mathcal{I}_{-1}\}$ is approximately uniformly distributed on $(-\pi,\pi)$---as verified empirically in Section~IV-B---then the discrete average over the complementary spectrum is well approximated by the integral in \eqref{eqparseval_identity}, and
\begin{equation}\label{eq:parseval_approx}
	\mathrm{NMSE}
	\approx
	\frac{N-r_{-1}}{N}
	\sum_{|n|>L}|c_n(\alpha)|^2.
\end{equation}
Since
	$|c_n(\alpha)|^2
	=
	\frac{\sin^2(\pi\alpha)}{\pi^2(\alpha-n)^2}$,
we have, for fixed $\alpha$ and large $L$,
	$\sum_{|n|>L}|c_n(\alpha)|^2
	\sim
	\frac{2\sin^2(\pi\alpha)}{\pi^2 L},
	L\to\infty$.
Therefore,
\begin{equation}\label{eq:parseval_prediction}
	\mathrm{NMSE}
	\approx
	\frac{N-r_{-1}}{N}\cdot
	\frac{2\sin^2(\pi\alpha)}{\pi^2 L}
	=
	O(L^{-1}).
\end{equation}
Accordingly, this predicts
$\mathrm{MSE}=O(N^{-1}L^{-1})$
and
$\mathrm{MAE}=O(N^{-1/2}L^{-1/2})$,
with constants depending on $\alpha$ and, through the factor $(N-r_{-1})/N$, on the proportion of the complementary spectral components. In the random-unitary setting considered here, the quantity $(N-r_{-1})/N$ is typically close to $1$, so \eqref{eq:parseval_prediction} is well approximated by the familiar $O(L^{-1})$ typical decay law.

The two predictions play complementary roles: \eqref{eq:worst_case_bound} provides a rigorous worst-case guarantee derived from the operator-level error bound, whereas \eqref{eq:parseval_prediction} captures the typical asymptotic behavior observed for the random unitary matrices used in our experiments.
	
	\subsection{Uniform Phase Distribution of Random Unitary Matrices}\label{sec:phase_verification}
	
	The Parseval-based prediction in~\eqref{eq:parseval_prediction} relies on the assumption that the eigenvalue phases are approximately uniformly distributed on $(-\pi,\pi)$. In the random-unitary setting considered here, we examine this assumption directly on the same matrices used in the subsequent experiments.
	
	\subsubsection{Histogram Analysis}
	
	Fig.~\ref{fig:phase_hist} shows the probability density histograms of the eigenvalue phases for $N\in\{1000,2000,3000\}$, together with the theoretical uniform density $\frac{1}{2\pi}\approx 0.159$.
	Across all three matrix sizes, the empirical density is nearly flat and remains in good overall agreement with the uniform reference. Although small local fluctuations are visible, no systematic deviation from uniformity is observed. As $N$ increases, these fluctuations become less pronounced, providing visual evidence that the phase distribution is well approximated by a uniform law on $(-\pi,\pi)$ in the present random-unitary setting.

	\begin{figure}[htbp]
		\centering
		\includegraphics[width=0.8\linewidth]{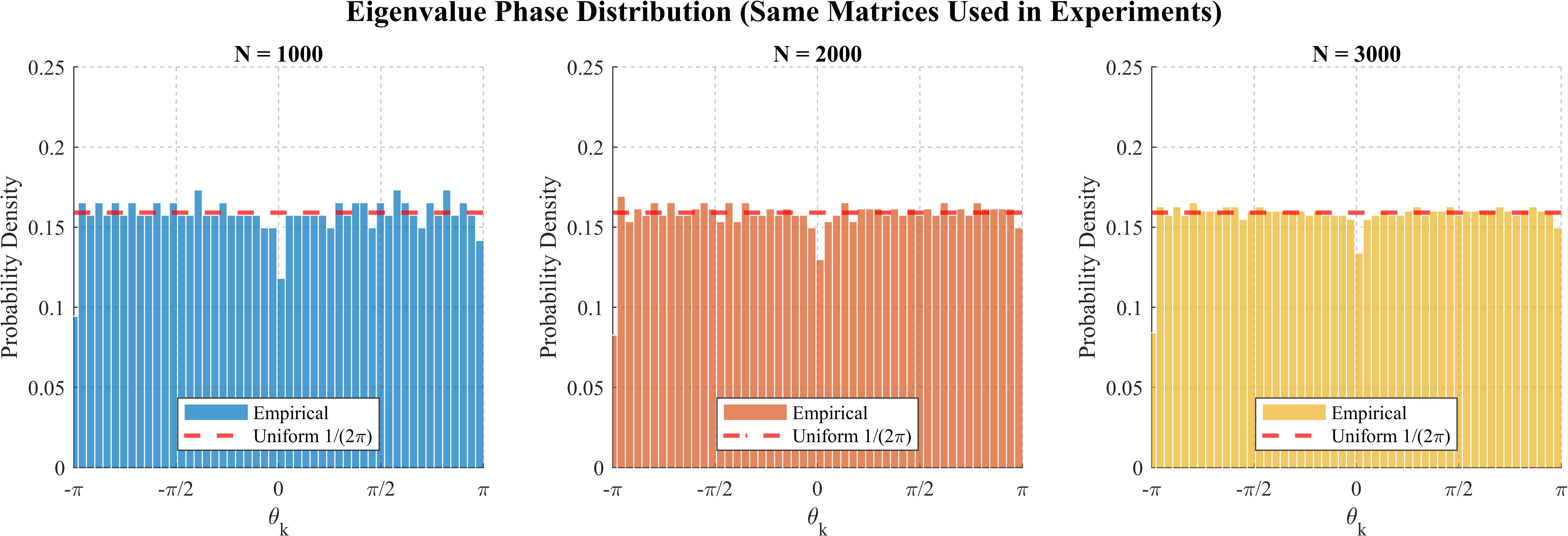}
		\caption{Probability density histograms of the eigenvalue phases $\theta_k$ of the random unitary matrices. The red dashed line indicates the density of the uniform distribution on $(-\pi,\pi)$.}
		\label{fig:phase_hist}
	\end{figure}
	
	\subsubsection{Kolmogorov--Smirnov Test}
	
	To complement the visual evidence with a statistical test, we apply the two-sided Kolmogorov--Smirnov (KS) test under the null hypothesis that the eigenvalue phases follow the distribution $\mathrm{Uniform}(-\pi,\pi)$. The KS statistic is defined by
	\begin{equation}\label{eq:ks_statistic}
		D_{KS}
		=
		\sup_{\theta}
		\left|
		\hat{F}_{N}(\theta)-F_{0}(\theta)
		\right|,
		\qquad
		F_{0}(\theta)
		=
		\frac{\theta+\pi}{2\pi},
	\end{equation}
	where $\hat{F}_{N}$ denotes the empirical cumulative distribution function (CDF).
	Fig.~\ref{fig:ks_cdf} compares the empirical CDF $\hat{F}_{N}$ with the reference CDF $F_{0}$. In all cases, the two curves are visually almost indistinguishable. The corresponding numerical results are reported in Table~\ref{tab:ks_test}.In all three cases, $D_{KS}$ is at most $17\%$ of the corresponding critical value, providing strong statistical evidence that the approximate uniform phase distribution assumption is reasonable for the random unitary matrices considered here. Together with the histogram analysis, this supports the approximation condition used in the Parseval-based analysis, under which the typical-case $O(L^{-1})$ decay law is derived.
	
	\begin{figure}[htbp]
		\centering
		\includegraphics[width=0.8\linewidth]{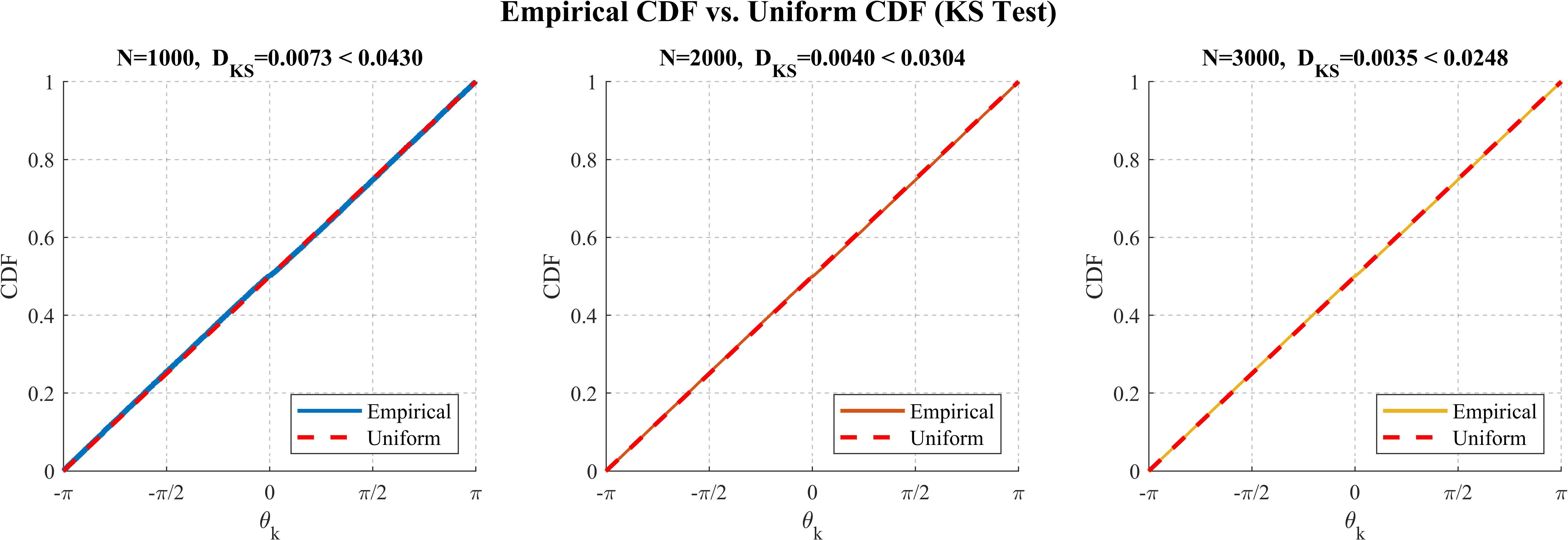}
		\caption{Empirical CDF $\hat{F}_{N}(\theta)$ versus the uniform reference CDF $F_{0}(\theta)=(\theta+\pi)/(2\pi)$. In each panel, the KS statistic $D_{KS}$ is below the $0.05$ critical value, so the null hypothesis of uniformity cannot be rejected.}
		\label{fig:ks_cdf}
	\end{figure}
	
	\begin{table}[htbp]
		\centering
		\caption{KS test results for the experimental random unitary matrices at significance level $\gamma=0.05$, with critical value $c_{\gamma}=1.36/\sqrt{N}$.}
		\label{tab:ks_test}
		\renewcommand{\arraystretch}{0.9}
		\setlength{\tabcolsep}{5pt}
		\begin{tabular}{ccccl}
			\toprule
			$N$ & $D_{KS}$ & $c_{0.05}=1.36/\sqrt{N}$ & $D_{KS}/c_{0.05}$ & Decision \\
			\midrule
			1000 & 0.0073 & 0.0430 & 0.170 & Fail to reject $H_{0}$ \\
			2000 & 0.0040 & 0.0304 & 0.132 & Fail to reject $H_{0}$ \\
			3000 & 0.0035 & 0.0248 & 0.141 & Fail to reject $H_{0}$ \\
			\bottomrule
		\end{tabular}
		\vspace{1mm}
		\footnotesize
		\begin{flushleft}
			$H_{0}$: the eigenvalue phases follow $\mathrm{Uniform}(-\pi,\pi)$. The ratio $D_{KS}/c_{0.05}\ll 1$ in all cases indicates that the null hypothesis cannot be rejected at the $0.05$ significance level.
		\end{flushleft}
	\end{table}

\subsection{Effect of Truncation Order on Approximation Accuracy}

We evaluate the approximation error for truncation orders $L \in \{10,15,20,25,30\}$, transform orders $\alpha \in \{0.15,0.35,0.\\55,0.75,0.95\}$, as well as graph sizes $N \in \{1000,2000,300\\0\}$. Fig.~\ref{fig:mse} reports the resulting MSE, MAE, and NMSE trends.

\begin{figure}[htbp]
	\centering
	\includegraphics[width=0.8\linewidth]{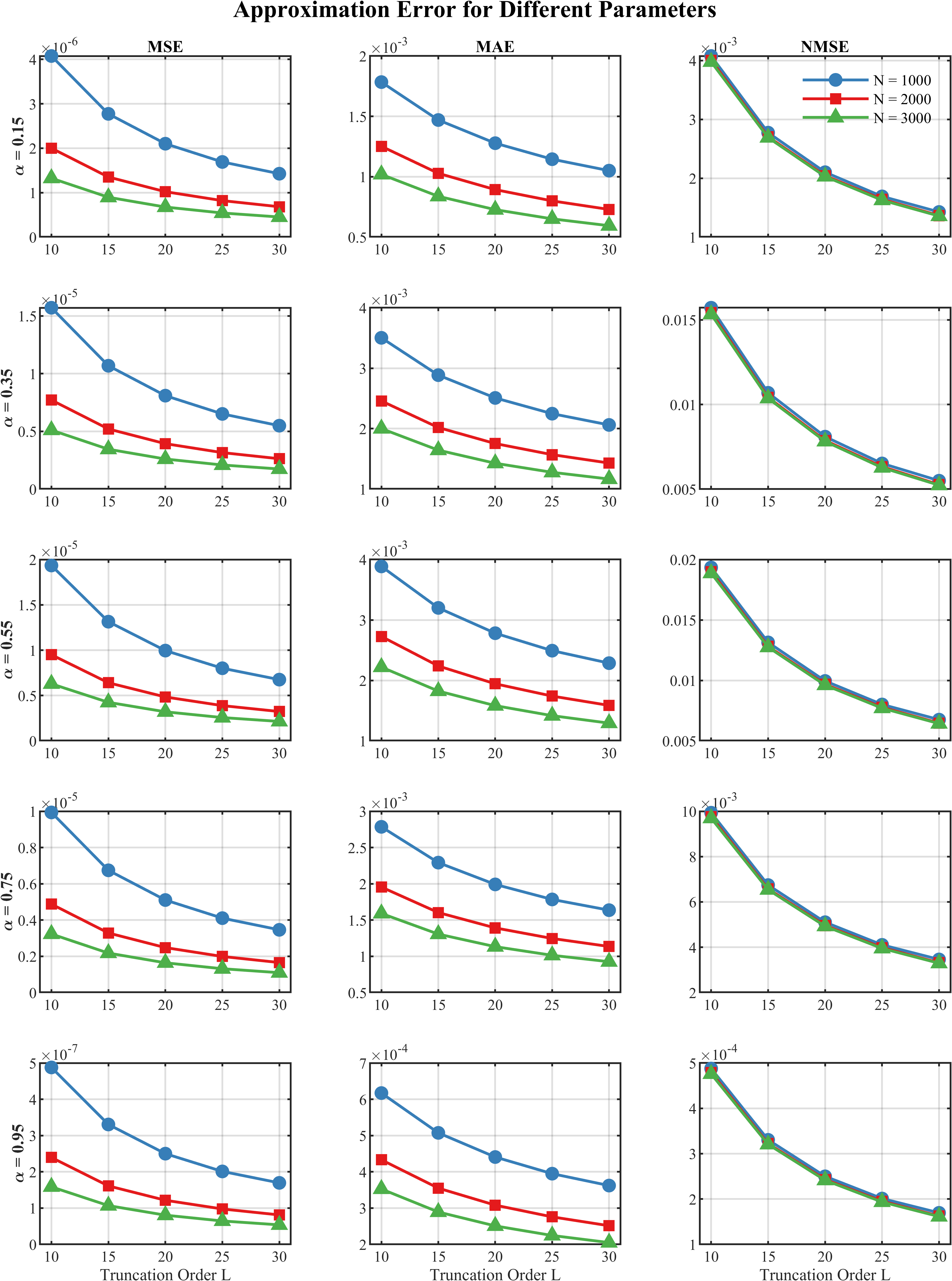}
	\caption{Approximation error versus truncation order for different graph sizes and transform orders.}
	\label{fig:mse}
\end{figure}

\subsubsection{Observed Convergence Behavior}

The following observations are drawn from Fig.~\ref{fig:mse} and are consistent with the theoretical predictions established in Section~IV-A.

\paragraph{Monotonic decrease with respect to $L$}
For all tested $(N,\alpha)$ combinations, the MSE, MAE, and NMSE decrease monotonically as $L$ increases from $10$ to $30$. This empirical behavior is consistent with the decay of the truncation error in the Fourier-series approximation.

\paragraph{Decay rate with respect to $L$}
Table~\ref{tab:rate_verification} reports $\mathrm{NMSE}\times L$ and $\mathrm{NMSE}\times L^{2}$ for the representative case $(N,\alpha)=(2000,0.55)$. The quantity $\mathrm{NMSE}\times L$ remains nearly constant ($0.190$--$0.194$), whereas $\mathrm{NMSE}\times L^{2}$ increases approximately linearly with $L$. This behavior is consistent with an $O(L^{-1})$ decay of the NMSE, in agreement with the Parseval-based prediction in~\eqref{eq:parseval_prediction}. Similar trends were observed for the other tested $(N,\alpha)$ settings. The corresponding scalings
	$\mathrm{MSE}=O(N^{-1}L^{-1})$,
	$\mathrm{MAE}=O(N^{-1/2}L^{-1/2})$,
are also consistent with the measured data.

\paragraph{Dependence on $\alpha$}
For fixed $N$ and $L$, the measured NMSE follows the $\sin^{2}(\pi\alpha)$ trend predicted by~\eqref{eq:parseval_prediction}. The error is largest near $\alpha=0.55$, where $\sin(\pi\alpha)$ is close to its maximum, and smallest near $\alpha=0.95$, where $\sin(\pi\alpha)$ is much smaller. As a representative example, at $N=2000$ and $L=10$,
\begin{equation*}
	\frac{\mathrm{NMSE}(\alpha=0.15)}
	{\mathrm{NMSE}(\alpha=0.55)}
	=
	\frac{4.0031\times 10^{-3}}{1.8995\times 10^{-2}}
	\approx 0.211,
\end{equation*}
which closely matches
	$\frac{\sin^{2}(0.15\pi)}{\sin^{2}(0.55\pi)}\approx 0.211$.
This provides a representative quantitative illustration of the $\sin^{2}(\pi\alpha)$ dependence.

\paragraph{Dependence on $N$}
For fixed $\alpha$ and $L$, the NMSE varies only mildly with $N$ over the tested range $N\in\{1000,2000,3000\}$. For example, at $\alpha=0.55$ and $L=10$, the NMSE changes from $0.01934$ ($N=1000$) to $0.018995$ ($N=2000$) and $0.018868$ ($N=3000$), remaining within a narrow range. This is consistent with the leading-order $N$-independence predicted by~\eqref{eq:parseval_prediction}. Since $\mathrm{MSE}=\mathrm{NMSE}/N$, the observed near-constancy of the NMSE implies an approximate $N^{-1}$ scaling of the MSE. For the same $(\alpha,L)$ pair, the MSE decreases from $1.934\times 10^{-5}$ ($N=1000$) to $6.2894\times 10^{-6}$ ($N=3000$), a ratio of approximately $3.08$, close to the factor $3000/1000=3$.

\paragraph{Practical accuracy}
At $L=10$, the largest NMSE observed among all tested configurations is $1.934\%$ ($N=1000$, $\alpha=0.55$). When $L$ increases to $20$, the NMSE falls below $1\%$ for all tested configurations, and at $L=30$ it is below $0.7\%$ in all cases. These results indicate that relatively small truncation orders already provide good approximation accuracy for the random-unitary setting considered here.

\begin{table}[htbp]
	\centering
	\caption{Numerical evidence for the $O(L^{-1})$ decay of NMSE for the representative case $N=2000$ and $\alpha=0.55$.}
	\label{tab:rate_verification}
	\resizebox{0.8\linewidth}{!}{
		\renewcommand{\arraystretch}{0.9}
		\begin{tabular}{cccc}
			\toprule
			$L$ & NMSE & NMSE $\times L$ & NMSE $\times L^{2}$ \\
			\midrule
			10 & 1.900E-02 & \textbf{0.190} & 1.90 \\
			15 & 1.282E-02 & \textbf{0.192} & 2.88 \\
			20 & 9.678E-03 & \textbf{0.194} & 3.87 \\
			25 & 7.773E-03 & \textbf{0.194} & 4.86 \\
			30 & 6.468E-03 & \textbf{0.194} & 5.82 \\
			\bottomrule
	\end{tabular}}
\end{table}

\subsubsection{Parseval Prediction Versus Experiment}

Before discussing the metric trends in more detail, we compare the measured NMSE with the Parseval-based prediction in~\eqref{eq:parseval_prediction}. Fig.~\ref{fig:parseval} shows three quantities for $N=2000$: (i) the measured NMSE; (ii) the exact Parseval tail sum $\sum_{|n|>L}|c_n(\alpha)|^2$; and (iii) the asymptotic approximation $\frac{2\sin^{2}(\pi\alpha)}{\pi^{2}L}$.

\begin{figure}[htbp]
	\centering
	\includegraphics[width=0.8\linewidth]{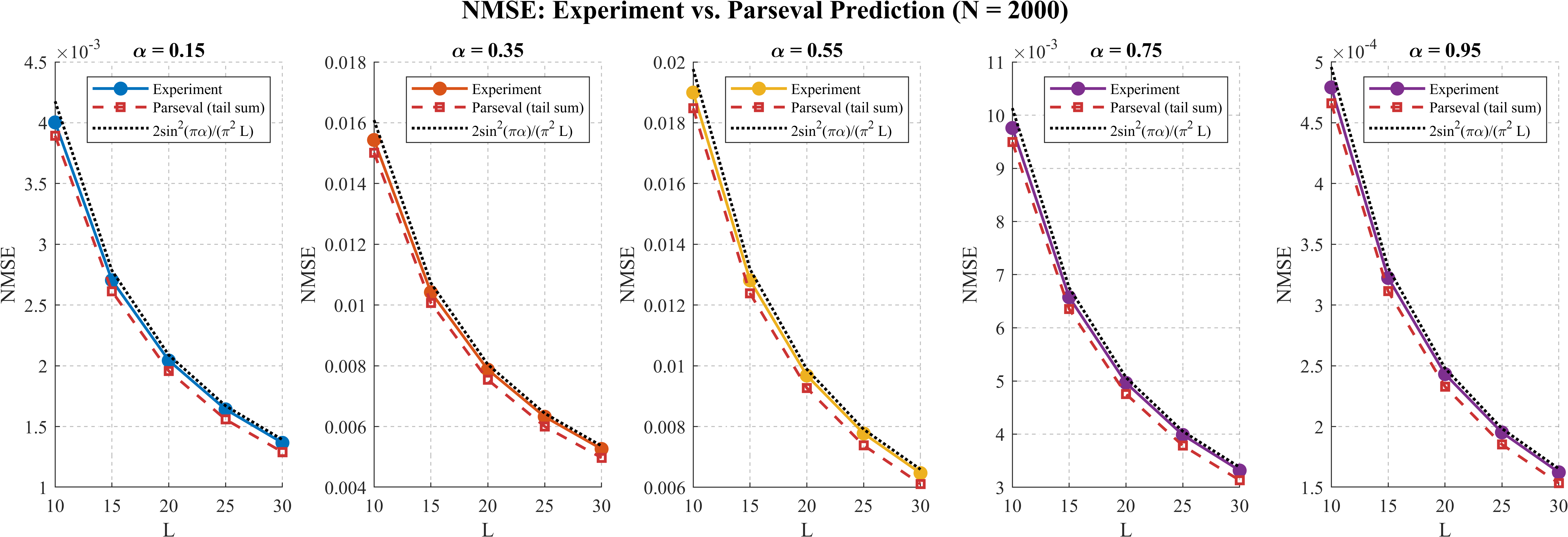}
	\caption{Measured NMSE versus Parseval-based predictions for $N=2000$. For each $\alpha$, both the exact tail sum and its asymptotic approximation closely match the measured NMSE, confirming the $O(L^{-1})$ decay.}
	\label{fig:parseval}
\end{figure}

Across all five values of $\alpha$, the three curves remain close to each other, with the maximum relative discrepancy below $5\%$ in the tested cases. This observation suggests that, once the approximate uniform phase distribution is established in Section~\ref{sec:phase_verification}, Parseval's identity provides an accurate description of both the magnitude and the $L^{-1}$ decay trend of the NMSE for the random unitary matrices considered here.

\subsubsection{Reconciling the Worst-Case Bound with the Observed Rate}\label{sec:bound_reconcile}

The rigorous worst-case bound in~\eqref{eq:worst_case_bound} scales as
$\mathrm{NMSE}
	\le
	O\!\left(
	\frac{1}{L^{2}\cos_{c,\min}^{2}}
	\right)$,
whereas the observed decay is approximately $O(L^{-1})$. These two results are not contradictory; rather, they describe different aspects of the approximation.

\paragraph{Why the worst-case bound can be loose}
The worst-case bound is obtained by applying the pointwise estimate
	$|R_L(\theta_k)|
	\le
	\frac{C}{L\,|\cos(\theta_k/2)|}$,
to each complementary eigenphase individually and then aggregating the resulting bounds. This leads to a worst-case expression controlled by the smallest complementary value
	$\cos_{c,\min}
	=
	\min_{k\notin\mathcal{I}_{-1}} |\cos(\theta_k/2)|$.
For the random unitary matrices used here, some complementary phases may still lie close to $\pi$, so $\cos_{c,\min}$ can be very small. Consequently, the prefactor in the worst-case bound can be large, making the bound rigorous but conservative in practice.

\paragraph{Why the typical decay is closer to $O(L^{-1})$}
The NMSE is an average over all contributing spectral samples, and most of the phases are well separated from $\pm\pi$. As shown in Section~\ref{sec:phase_verification}, the phase distribution is approximately uniform for the random unitary matrices considered here. In this setting, the average squared error is better characterized by Parseval's identity than by the single most unfavorable spectral sample, which explains the observed $O(L^{-1})$-type decay of
	$\sum_{|n|>L}|c_n(\alpha)|^2$.
In summary, \eqref{eq:worst_case_bound} provides a rigorous worst-case guarantee derived from the operator-level analysis, whereas Parseval's identity captures the typical behavior observed for the random unitary matrices considered in our experiments. The two viewpoints are therefore complementary rather than contradictory.

\subsubsection{Choice of Truncation Order}

Based on the typical-case approximation in~\eqref{eq:parseval_prediction}, a practical guideline for choosing the truncation order to achieve a target NMSE level $\epsilon$ is
\begin{equation}
	L
	\ge
	\left\lceil
	\frac{2\sin^{2}(\pi\alpha)}{\pi^{2}\epsilon}
	\right\rceil,
\end{equation}
where $\epsilon$ denotes the desired NMSE level. Since this rule depends on the transform order $\alpha$, it provides a refined estimate when $\alpha$ is known. For a simpler order-independent guideline, we use the bound $\sin^{2}(\pi\alpha)\le 1$, which yields
	$L
	\ge
	\left\lceil
	\frac{2}{\pi^{2}\epsilon}
	\right\rceil$.
Although this rule is conservative, it provides a convenient practical estimate that is easy to apply.

For example, the order-independent guideline gives:

	(1) $\epsilon=0.02$ ($2\%$ NMSE):
		$L \ge \left\lceil \frac{2}{\pi^{2}\times 0.02} \right\rceil
		= 11$;

	(2) $\epsilon=0.01$ ($1\%$ NMSE):
		$L \ge \left\lceil \frac{2}{\pi^{2}\times 0.01} \right\rceil
		= 21$.

These values are slightly more conservative than the empirical thresholds observed in our experiments. In particular, for the tested transform orders, $L=10$ already yields NMSE below $2\%$ in all tested configurations, while $L=20$ yields NMSE below $1\%$ in all tested configurations. Based on these observations, we use $L\in[10,20]$ in the subsequent experiments. This range provides a good accuracy--efficiency tradeoff for the tasks considered in this paper while keeping the online computational cost within $2L\in[20,40]$ scalar--matrix accumulations per order update.
	
\subsection{Comparison of Computational Efficiency on Large-Scale Graphs}\label{sec:exp_efficiency}

We evaluate the online computational time for graph sizes ranging from $N=1000$ to $N=8000$, with $L=10$ and $\alpha=0.5$ fixed. The recorded time covers only the online phase, i.e., constructing $\mathbf{Q}_{L}^{\alpha}$ for the proposed FGFRFT or reconstructing $\mathbf{F}_{G}^{\alpha}$ from cached $\mathbf{V}$, $\mathbf{\Sigma}$, and $\mathbf{V}^{H}$ for the standard GFRFT. Offline pre-computation, including eigendecomposition and the caching of matrix powers, is excluded from both timings in order to isolate the per-query cost. Therefore, the timings reported in Table~\ref{tab:efficiency_accuracy} reflect only the online operator-construction phase rather than the total runtime or memory footprint of the full pipeline.

\begin{table}[htbp]
	\centering
	\caption{Online computational time and approximation accuracy for $L=10$ and $\alpha=0.5$.}
	\label{tab:efficiency_accuracy}
	\renewcommand{\arraystretch}{0.9}
	\setlength{\tabcolsep}{5pt}
	\resizebox{\linewidth}{!}{
		\begin{tabular}{ccccccc}
			\toprule[1.5pt]
			\multirow{2}{*}{\textbf{$N$}}
			& \textbf{Standard}
			& \textbf{Proposed}
			& \textbf{Speedup}
			& \multirow{2}{*}{\textbf{MSE}}
			& \multirow{2}{*}{\textbf{MAE}}
			& \multirow{2}{*}{\textbf{NMSE}} \\
			& \textbf{GFRFT (s)}
			& \textbf{FGFRFT (s)}
			& \textbf{($\times$)} & & & \\
			\midrule[1pt]
			1000 & 0.0130 & 0.0068 & 1.92 & 2.05E-05 & 4.00E-03 & 2.05E-02 \\
			2000 & 0.0866 & 0.0366 & 2.37 & 1.00E-05 & 2.80E-03 & 2.01E-02 \\
			3000 & 0.3245 & 0.0845 & 3.84 & 6.58E-06 & 2.27E-03 & 1.97E-02 \\
			4000 & 0.7227 & 0.1586 & 4.56 & 4.92E-06 & 1.97E-03 & 1.97E-02 \\
			5000 & 1.3942 & 0.2484 & 5.61 & 3.98E-06 & 1.77E-03 & 1.99E-02 \\
			6000 & 2.3962 & 0.3582 & 6.69 & 3.34E-06 & 1.62E-03 & 2.00E-02 \\
			7000 & 3.7579 & 0.4952 & \textbf{7.59} & 2.83E-06 & 1.49E-03 & 1.98E-02 \\
			8000 & 5.5446 & 0.8120 & 6.83 & 2.46E-06 & 1.39E-03 & 1.97E-02 \\
			\bottomrule[1.5pt]
	\end{tabular}}
\end{table}

The proposed FGFRFT consistently outperforms the standard GFRFT in online runtime across all tested graph sizes. The speedup ratio increases with $N$, from $1.92\times$ at $N=1000$ to a peak of $7.59\times$ at $N=7000$, and then decreases slightly to $6.83\times$ at $N=8000$. This improvement is consistent with the reduction of the online computational complexity from $O(N^{3})$ for the standard GFRFT to $O(2LN^{2})$ for the proposed FGFRFT, where the dense matrix--matrix multiplication is replaced by an exact treatment of the $\lambda=-1$ component together with a sequence of scalar--matrix linear combinations over the complementary cached powers.

The slight drop in speedup at $N=8000$ is likely related to hardware-level memory effects. In particular, caching $L=10$ complex matrices of size $8000\times 8000$ requires approximately
	$10 \times 8000^{2} \times 16\,\mathrm{bytes}
	\approx 9.5\,\mathrm{GB}$
of memory, which can increase memory traffic and reduce practical efficiency on the tested platform. This effect is implementation- and hardware-dependent and does not change the underlying $O(2LN^{2})$ online complexity of the proposed method.
Overall, these results indicate that the proposed FGFRFT can maintain good approximation accuracy (NMSE $\approx 2\%$ for $L=10$) while providing substantial online runtime savings in the tested random-unitary setting. The results support the practical usefulness of FGFRFT for repeated online order-update scenarios in which the offline cost can be amortized across many online queries.	

\subsection{Robustness to Input Perturbations}

To evaluate the behavior of the truncated FGFRFT under input perturbations, we conduct a numerical study in the noisy-input setting. In this experiment, we fix a representative problem size $N=2000$ and consider three transform orders $\alpha \in \{0.15,0.55,0.95\}$, two truncation lengths $L \in \{10,20\}$, and four perturbation levels $sigma \in \{10^{-4},10^{-3},10^{-2},10^{-1}\}$. For each configuration, $10$ independent trials are performed. Given a clean input $\mathbf{x}$ and a complex Gaussian perturbation vector $\mathbf{n}$, the perturbed input is defined by
	$\mathbf{x}_{\sigma}
	=
	\mathbf{x}+\sigma\mathbf{n}$,
where $\sigma$ controls the perturbation magnitude.
We first measure the discrepancy between the exact output $\mathbf{F}_{G}^{\alpha}\mathbf{x}_{\sigma}$ and the truncated output $\mathbf{Q}_{L}^{\alpha}\mathbf{x}_{\sigma}$ by the relative output error
\begin{equation}
	\varepsilon_{\mathrm{out}}
	=
	\frac{
		\left\|
		\mathbf{Q}_{L}^{\alpha}\mathbf{x}_{\sigma}
		-
		\mathbf{F}_{G}^{\alpha}\mathbf{x}_{\sigma}
		\right\|_{2}
	}{
		\left\|
		\mathbf{F}_{G}^{\alpha}\mathbf{x}_{\sigma}
		\right\|_{2}
	}.
\end{equation}

To further assess how perturbations are propagated by the truncated operator, we also consider the perturbation-propagation gain
\begin{equation}
	\eta_{Q}
	=
	\frac{
		\left\|
		\mathbf{Q}_{L}^{\alpha}(\mathbf{x}+\sigma\mathbf{n})
		-
		\mathbf{Q}_{L}^{\alpha}\mathbf{x}
		\right\|_{2}
	}{
		\|\sigma\mathbf{n}\|_{2}
	}.
\end{equation}
In addition, we define the gain deviation
	$\Delta_{\eta}
	=
	|\eta_{Q}-\eta_{B}|$,
where $\eta_{B}$ denotes the corresponding perturbation-propagation gain of the exact operator $\mathbf{F}_{G}^{\alpha}$. Since $\mathbf{F}_{G}^{\alpha}$ is unitary in the present setting, one has
$\eta_{B}=1$. We also report the reconstruction error
\begin{equation}
	\varepsilon_{\mathrm{rec}}
	=
	\frac{
		\left\|
		(\mathbf{Q}_{L}^{\alpha})^{H}\mathbf{Q}_{L}^{\alpha}\mathbf{x}_{\sigma}
		-
		\mathbf{x}_{\sigma}
		\right\|_{2}
	}{
		\left\|
		\mathbf{x}_{\sigma}
		\right\|_{2}
	},
\end{equation}
which quantifies the deviation of the truncated transform from perfect reconstruction under perturbed inputs.

Fig.~\ref{fig:input_robustness} shows that the mean relative output error varies only mildly as $\sigma$ increases from $10^{-4}$ to $10^{-1}$, whereas the truncation length $L$ has a much stronger influence. For all tested transform orders, $L=20$ consistently yields lower errors than $L=10$. Among the three tested orders, the case $\alpha=0.55$ is the most challenging, while $\alpha=0.95$ yields the smallest errors. This indicates that, within the tested perturbation range, the approximation error is governed primarily by the truncation quality rather than by the perturbation level itself.
Table~\ref{tab:input_robustness} leads to the same conclusion at the largest perturbation level $\sigma=10^{-1}$. The perturbation-propagation gain $\eta_{Q}$ remains close to $1$ in all cases, while both $\Delta_{\eta}$ and $\varepsilon_{\mathrm{rec}}$ decrease as $L$ increases from $10$ to $20$. Therefore, the truncated FGFRFT preserves the perturbation behavior of the exact transform reasonably well, and its robustness is determined mainly by the truncation length.

\begin{figure}[!t]
	\centering
	\includegraphics[width=0.8\linewidth]{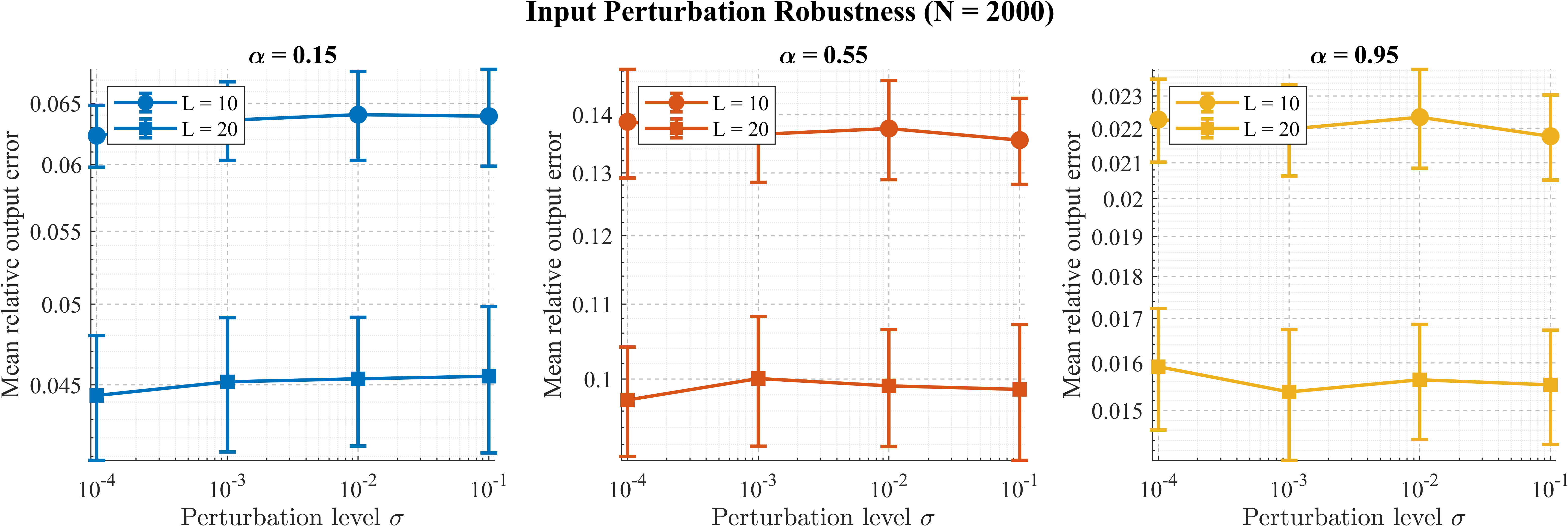}
	\caption{Mean relative output error under input perturbations for $N=2000$. Each subplot corresponds to one transform order and compares the cases $L=10$ and $L=20$. The error varies only mildly with $\sigma$ and decreases as $L$ increases.}
	\label{fig:input_robustness}
\end{figure}

\begin{table}[!t]
	\centering
	\caption{Representative robustness statistics at the largest perturbation level $\sigma=10^{-1}$ for $N=2000$.}
	\label{tab:input_robustness}
	\renewcommand{\arraystretch}{1.08}
	\resizebox{\linewidth}{!}{
		\begin{tabular}{cccccc}
			\toprule
			$\alpha$ & $L$ & Mean rel. output error & $\eta_{Q}$ & $|\eta_{Q}-\eta_{B}|$ & Reconstruction error \\
			\midrule
			0.15 & 10 & $0.0639 \pm 0.0037$ & $0.9979 \pm 0.0037$ & $0.0021 \pm 0.0004$ & $0.0378 \pm 0.0018$ \\
			0.15 & 20 & $0.0455 \pm 0.0026$ & $0.9990 \pm 0.0026$ & $0.0010 \pm 0.0003$ & $0.0279 \pm 0.0020$ \\
			0.55 & 10 & $0.1355 \pm 0.0101$ & $0.9906 \pm 0.0101$ & $0.0094 \pm 0.0021$ & $0.1894 \pm 0.0080$ \\
			0.55 & 20 & $0.0987 \pm 0.0090$ & $0.9949 \pm 0.0090$ & $0.0051 \pm 0.0020$ & $0.1370 \pm 0.0088$ \\
			0.95 & 10 & $0.0218 \pm 0.0009$ & $0.9998 \pm 0.0009$ & $0.0002 \pm 0.0001$ & $0.0124 \pm 0.0002$ \\
			0.95 & 20 & $0.0155 \pm 0.0016$ & $0.9999 \pm 0.0016$ & $0.0001 \pm 0.0001$ & $0.0070 \pm 0.0002$ \\
			\bottomrule
	\end{tabular}}
\end{table}

\subsection{Validation of the Exact Treatment of the $\lambda=-1$ Component}

To directly validate exact treatment of the spectral component associated with $\lambda=-1$, we conduct a controlled experiment using a unitary GFT matrix with a prescribed spectrum. Specifically, we construct a matrix of size $N=2000$ whose spectrum contains $r_{-1}=32$ exact eigenvalues equal to $-1$, while the remaining eigenvalues are placed on the unit circle away from $\pm\pi$. The transform order is fixed at $\alpha=0.55$, and two representative truncation orders, $L=10$ and $L=20$, are considered. We compare two operator constructions. The first is standard truncated Fourier-series form
\begin{equation}
	\mathbf{Q}_{L,\mathrm{std}}^{\alpha}
	=
	\sum_{n=-L}^{L} c_n(\alpha)\mathbf{F}_G^n,
\end{equation}
which does not isolate the $\lambda=-1$ component. The second is the proposed FGFRFT construction
\begin{equation}
	\mathbf{Q}_{L,\mathrm{prop}}^{\alpha}
	=
	e^{j\pi\alpha}\mathbf{P}_{-1}
	+
	\sum_{n=-L}^{L} c_n(\alpha)\mathbf{F}_G^n\mathbf{P}_c,
\end{equation}
which treats the $\lambda=-1$ component exactly.
To reveal the source of the approximation error, we report the Frobenius norm of the total operator error, the error restricted to the $\lambda=-1$ block, and the error restricted to the complementary block:
\begin{equation}
	E_{\mathrm{total}}
	=
	\|\mathbf{Q}_{L}^{\alpha}-\mathbf{F}_G^{\alpha}\|_F,
\end{equation}
\begin{equation}
	E_{-1}
	=
	\|\mathbf{P}_{-1}(\mathbf{Q}_{L}^{\alpha}-\mathbf{F}_G^{\alpha})\mathbf{P}_{-1}\|_F,
\end{equation}
\begin{equation}
	E_c
	=
	\|\mathbf{P}_c(\mathbf{Q}_{L}^{\alpha}-\mathbf{F}_G^{\alpha})\mathbf{P}_c\|_F.
\end{equation}

The numerical results are summarized in Table~\ref{tab:neg1_validation}. Several conclusions are immediate. First, for the standard truncated form, the dominant part of the error comes from the $\lambda=-1$ block: at both $L=10$ and $L=20$, the total error is almost entirely explained by $E_{-1}$. Second, for the proposed FGFRFT, the $\lambda=-1$ block error is reduced to numerical zero (on the order of $10^{-15}$), confirming that this spectral component is treated exactly in floating-point computation. Third, for the proposed construction, the total error coincides with the complementary-block error, which agrees precisely with the theory developed in Section~III: once the $\lambda=-1$ component is handled exactly, the remaining truncation error is contributed only by the complementary spectral components.
Moreover, as $L$ increases from $10$ to $20$, the complementary-block error decreases from $2.370129$ to $1.206450$, and the total error of the proposed FGFRFT decreases by the same amount. In contrast, the standard truncated form retains a large error floor caused by the untreated $\lambda=-1$ block. This controlled experiment therefore provides direct numerical evidence that the exact treatment of the $\lambda=-1$ component is both theoretically necessary and practically effective.

\begin{table}[htbp]
	\centering
	\caption{Validation of the exact treatment of the $\lambda=-1$ component for a controlled unitary GFT matrix with $N=2000$, $r_{-1}=32$, and $\alpha=0.55$.}
	\label{tab:neg1_validation}
	\renewcommand{\arraystretch}{1.05}
	\setlength{\tabcolsep}{6pt}
	\resizebox{0.9\linewidth}{!}{
		\begin{tabular}{ccccccc}
			\toprule
			$L$ & Construction & $E_{\mathrm{total}}$ & $E_{-1}$ & $E_c$ \\
			\midrule
			10 & Standard truncated form & $6.071996$ & $5.590316$ & $2.370129$ \\
			10 & Proposed FGFRFT         & $2.370129$ & $9.726379\times 10^{-16}$ & $2.370129$ \\
			\midrule
			20 & Standard truncated form & $5.716776$ & $5.588024$ & $1.206450$ \\
			20 & Proposed FGFRFT         & $1.206450$ & $1.223375\times 10^{-15}$ & $1.206450$ \\
			\bottomrule
	\end{tabular}}
\end{table}

\section{Experimental Results}

\subsection{Transform Order Learning Experiments}

This subsection examines whether the proposed FGFRFT
can be used as a differentiable module in end-to-end
training. We first introduce a simple multi-layer
transform network and then study its behavior under
gradient-based optimization through transform-order
learning experiments.

Consider a cascaded network with $K$ FGFRFT layers.
Each layer applies a transform operator
$\mathbf{Q}_L^{\alpha_l}$ parameterized by a learnable
order $\alpha_l$. Since $\mathbf{Q}_L^{\alpha_l}$ is
differentiable with respect to $\alpha_l$, the
transform orders can be updated by backpropagation.
Given an input graph signal $\mathbf{X}$, the network
output is
	$\hat{\mathbf{Y}}
	=
	\mathbf{Q}_L^{\alpha_K}
	\mathbf{Q}_L^{\alpha_{K-1}}
	\cdots
	\mathbf{Q}_L^{\alpha_1}\mathbf{X}$,
where $\alpha_l$ denotes the learnable transform order
of the $l$th layer. The target output is defined as
$
 \mathbf{Y} = \mathbf{F}_G^{\alpha_{\mathrm{target}}}\mathbf{X}$,
and the loss function is 
$
\mathcal{L}
=
\frac{1}{N^2}\|\hat{\mathbf{Y}}-\mathbf{Y}\|_F^2$.

In the experiments, we set the network depth to
$K \in \{1,2,3\}$, initialize each layer order to
$\alpha_{\mathrm{init}} = 0.1$, and set the target
total order to $\alpha_{\mathrm{target}} = 1.5$.
We choose $\mathbf{X}=\mathbf{I}$ so that the learning
task corresponds to approximating the target transform
operator itself. Optimization is performed using Adam
with learning rate $0.01$ for $200$ iterations.
The results are summarized in
Table~\ref{tab:learning_results}, which reports the
final loss, the learned total order
$\sum_{l=1}^{K}\alpha_l$, the total-order error
$\left|\sum_{l=1}^{K}\alpha_l-\alpha_{\mathrm{target}}\right|$,
the total training time, and the runtime speedup.
Fig.~\ref{fig:learning_curves} shows the training-loss
curves and the trajectories of the learned total order.

\begin{table}[htbp]
	\centering
	\caption{Order-learning performance for different
		network depths ($L=10, K \in \{1,2,3\}$,
		$\alpha_{\mathrm{target}} = 1.5$).}
	\label{tab:learning_results}
	\renewcommand{\arraystretch}{1.0}
	\setlength{\tabcolsep}{6pt}
	\resizebox{\linewidth}{!}{
		\begin{tabular}{ccccccc}
			\toprule[1.5pt]
			\multirow{2}{*}{\textbf{Method}} &
			\textbf{Network} &
			\textbf{Final} &
			\textbf{Final} &
			\textbf{Order} &
			\textbf{Total} &
			\textbf{Speedup} \\
			& \textbf{Depth ($K$)} &
			\textbf{Loss} &
			\textbf{$\sum \alpha$} &
			\textbf{Error} &
			\textbf{Time (s)} &
			\textbf{($\times$)} \\
			\midrule[1pt]
			
			\multirow{3}{*}{\shortstack{\textbf{Proposed}\\\textbf{FGFRFT}}}
			& 1 & 5.02E-06 & 1.5009 & 0.0009 & \textbf{436.61} & \textbf{5.15} \\
			& 2 & 7.21E-06 & 1.4989 & 0.0011 & \textbf{969.66} & \textbf{2.73} \\
			& 3 & 1.54E-05 & 1.4947 & 0.0053 & \textbf{1497.06} & \textbf{2.78} \\
			
			\midrule
			
			\multirow{3}{*}{\shortstack{Standard\\GFRFT}}
			& 1 & 4.94E-10 & 1.5009 & 0.0009 & 2250.78 & 1.00 \\
			& 2 & 4.81E-12 & 1.4999 & 0.0000 & 2650.25 & 1.00 \\
			& 3 & 1.02E-11 & 1.5000 & 0.0000 & 4163.34 & 1.00 \\
			
			\bottomrule[1.5pt]
	\end{tabular}}
	
	\vspace{1mm}
	\footnotesize
\end{table}

\begin{figure}[htbp]
	\centering
	\includegraphics[width=0.8\linewidth]{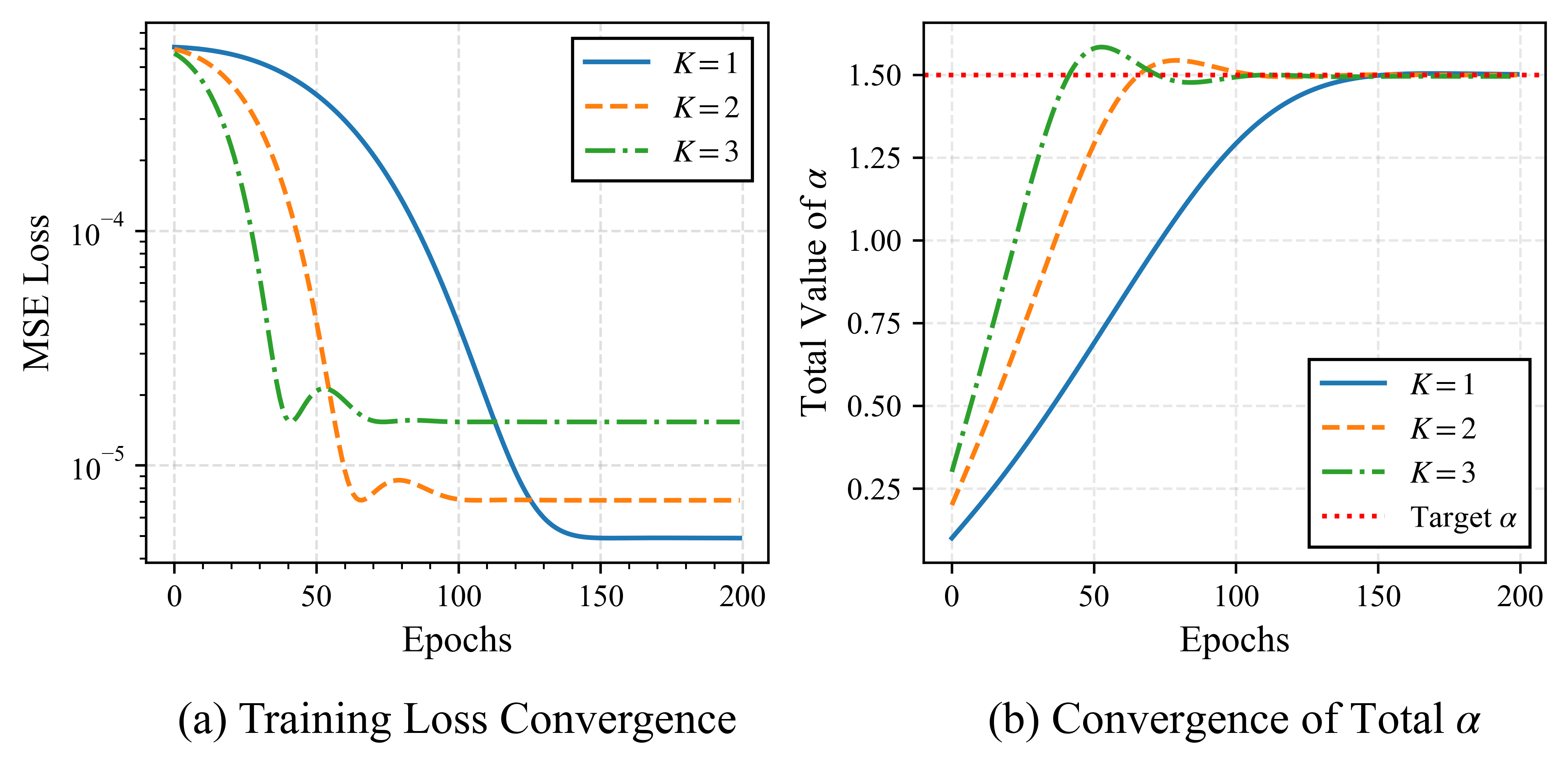}
	\caption{Training-loss curves and trajectories of the
		learned total order $\sum_l \alpha_l$ for
		different network depths.}
	\label{fig:learning_curves}
\end{figure}

Table~\ref{tab:learning_results} and
Fig.~\ref{fig:learning_curves} provide empirical
evidence that the proposed FGFRFT can be optimized by
gradient descent in this multi-layer setting. For the
tested network depths $K\in\{1,2,3\}$, the learned
total order $\sum_l \alpha_l$ approaches the target
value $\alpha_{\mathrm{target}}=1.5$ with small final
order errors ranging from $0.0009$ to $0.0053$.
This behavior is consistent with the approximate
additivity result established in Section~III and
indicates that the approximation is sufficiently
accurate for gradient-based order updates.

Compared with the standard GFRFT implementation, the
FGFRFT reaches a higher final loss, on the order of
$10^{-5}$, which is attributable to the truncation
approximation. Nevertheless, it achieves substantial
runtime reductions, with speedups ranging from
$2.73\times$ to $5.15\times$ across the tested
network depths. In addition, the training curves in
Fig.~\ref{fig:learning_curves} show that all tested
configurations converge toward the target total order
within 200 iterations, although deeper networks
exhibit slightly larger residual loss and order error.
Overall, these results suggest that the proposed
FGFRFT is trainable in an end-to-end manner and can
serve as a practical learnable transform module when a
moderate approximation error is acceptable in exchange
for improved computational efficiency.

\subsection{Image and Point-Cloud Denoising via Gradient-Based Learning}

To further assess the practical utility and
computational advantage of the proposed FGFRFT in
large-scale graph signal processing tasks, we model
images and 3D point clouds as graph signals and apply
the method to denoising problems. Following the
gradient-based denoising framework in~\cite{ref25},
we jointly optimize the transform order $\alpha$ and
the diagonal filter coefficients $\mathbf{H}$ by
backpropagation, aiming to minimize the reconstruction
error with respect to the clean signal:
\begin{equation}
(\alpha^*, \mathbf{H}^*)
=
\arg\min_{\alpha,\mathbf{H}}
\left\|
\mathbf{F}_G^{-\alpha}\mathbf{H}\mathbf{F}_G^{\alpha}\mathbf{y}
-
\mathbf{x}
\right\|_2^2.
\end{equation}

In this section, our main goal is not to redesign the
optimization strategy itself, but rather to evaluate
whether replacing the exact GFRFT operator with the
approximate FGFRFT operator under the same training
framework can reduce computation while maintaining
competitive denoising performance:
\begin{equation}
	(\alpha^*, \mathbf{H}^*)
	=
	\arg\min_{\alpha,\mathbf{H}}
	\left\|
	\mathbf{Q}_L^{-\alpha}\mathbf{H}\mathbf{Q}_L^{\alpha}\mathbf{y}
	-
	\mathbf{x}
	\right\|_2^2.
\end{equation}

\subsubsection{Image Denoising}

For image denoising, we use standard $256\times256$
images from the Set12 dataset~\cite{ref39}. To meet
the memory constraints of graph-based processing, each
image is partitioned into non-overlapping
$64\times64$ patches. Each patch is then vectorized
into a graph signal of dimension $N=4096$, and a
4-nearest-neighbor grid graph is constructed in the
pixel domain. Additive white Gaussian noise with
standard deviation $\sigma=20$ is added to the noisy
observations. We use Adam with learning rate $0.01$
for $300$ epochs. The truncation order is fixed at
$L=10$, and the initial transform order is set to
$\alpha=0.5$. For comparison, the exact GFRFT is
optimized under the same graph structure and the same
training hyperparameters. The visual results are shown
in Fig.~\ref{fig:image_denoising}, and the numerical
results are summarized in
Table~\ref{tab:image_results}.

\begin{table}[htbp]
	\centering
	\caption{Image denoising results on Set12.}
	\label{tab:image_results}
	\renewcommand{\arraystretch}{0.8}
	\setlength{\tabcolsep}{8pt}
	\resizebox{\linewidth}{!}{
		\begin{tabular}{cccccc}
			\toprule[1.5pt]
			\multirow{2}{*}{\textbf{Image}} &
			\multicolumn{2}{c}{\textbf{Standard GFRFT}} &
			\multicolumn{2}{c}{\textbf{Proposed FGFRFT}} &
			\multirow{2}{*}{\textbf{PSNR Diff.}} \\
			\cmidrule(lr){2-3} \cmidrule(lr){4-5}
			& \textbf{PSNR} & \textbf{SSIM}
			& \textbf{PSNR} & \textbf{SSIM} & \\
			\midrule
			Image 1 & 39.222 & 0.9503 & \textbf{40.132} & \textbf{0.9691} & +0.91 dB \\
			Image 2 & \textbf{38.499} & 0.9525 & 38.131 & \textbf{0.9598} & -0.37 dB \\
			Image 3 & 39.925 & 0.9694 & \textbf{41.517} & \textbf{0.9900} & +1.59 dB \\
			\midrule[1pt]
			\textbf{Time Metric} & \multicolumn{2}{c}{\textbf{Standard}} &
			\multicolumn{2}{c}{\textbf{Proposed}} & \textbf{Speedup} \\
			\midrule
			Avg.\ Time / Batch & \multicolumn{2}{c}{540 s} &
			\multicolumn{2}{c}{\textbf{220 s}} &
			\multirow{2}{*}{\textbf{2.47$\times$}} \\
			Total Time & \multicolumn{2}{c}{2.40 h} &
			\multicolumn{2}{c}{\textbf{0.97 h}} & \\
			\bottomrule[1.5pt]
	\end{tabular}}
\end{table}

\begin{figure}[htbp]
	\centering
	\includegraphics[width=0.8\linewidth]{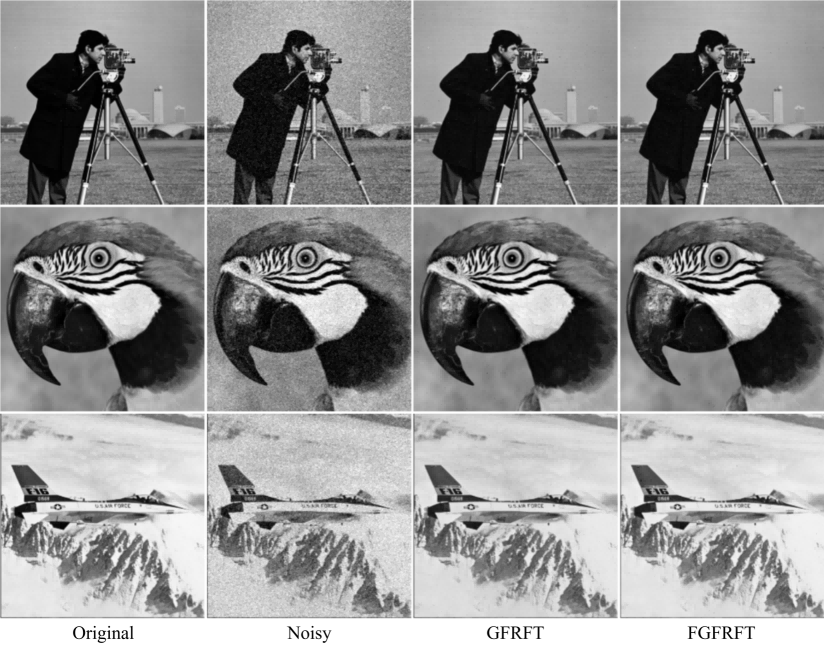}
	\caption{Visual comparison of image denoising
		results on sample images from Set12.}
	\label{fig:image_denoising}
\end{figure}

On the three tested image examples, the proposed
FGFRFT achieves denoising quality comparable to that
of the exact GFRFT, while substantially reducing
runtime. In terms of PSNR, FGFRFT performs better on
two images and slightly worse on one image; in terms
of SSIM, it is consistently higher on all three
examples. At the same time, the average processing
time per batch is reduced from $540$ s to $220$ s,
and the total runtime is reduced from $2.40$ h to
$0.97$ h, corresponding to a speedup of about
$2.47\times$.

\subsubsection{Point-Cloud Denoising}

For 3D point-cloud denoising, we use samples from the
Microsoft Voxelized Upper Bodies dataset~\cite{ref40}.
Following a preprocessing strategy similar to the
image experiment, each point cloud is downsampled and
split into multiple independent batches containing
$N=4000$ vertices. A 40-nearest-neighbor graph is
constructed using Euclidean distance. The experiment
is conducted under the same noise level
$\sigma=20.0$, while the number of training epochs is
increased to $1000$ to facilitate convergence. All
other hyperparameters are kept the same as in the
image experiment. The visual results are shown in
Fig.~\ref{fig:cloud_denoising}, and the quantitative
results are given in Table~\ref{tab:cloud_results}.

\begin{table}[!t]
	\caption{Point-cloud denoising results on the
		Microsoft voxelized dataset.}
	\label{tab:cloud_results}
	\renewcommand{\arraystretch}{0.8}
	\setlength{\tabcolsep}{8pt}
	\centering
	\resizebox{\linewidth}{!}{
		\begin{tabular}{cccc}
			\toprule[1.5pt]
			\multirow{2}{*}{\textbf{Point Cloud}} &
			\textbf{Standard GFRFT} &
			\textbf{Proposed FGFRFT} &
			\multirow{2}{*}{\textbf{PSNR Diff.}} \\
			\cmidrule(lr){2-3}
			& \textbf{PSNR (dB)} & \textbf{PSNR (dB)} & \\
			\midrule
			David9 & 37.10 & \textbf{40.85} & +3.75 dB \\
			Andrew9 & 38.95 & \textbf{39.26} & +0.31 dB \\
			Sarah9 & 42.63 & \textbf{42.96} & +0.33 dB \\
			\midrule[1pt]
			\textbf{Time Metric} & \textbf{Standard} &
			\textbf{Proposed} & \textbf{Speedup} \\
			\midrule
			Avg.\ Time / Batch & 3900 s & \textbf{1200 s} &
			\multirow{2}{*}{\textbf{3.25$\times$}} \\
			Total Time & 5.42 h & \textbf{1.67 h} & \\
			\bottomrule[1.5pt]
		\end{tabular}
	}
\end{table}

\begin{figure}[htbp]
	\centering
	\includegraphics[width=0.8\linewidth]{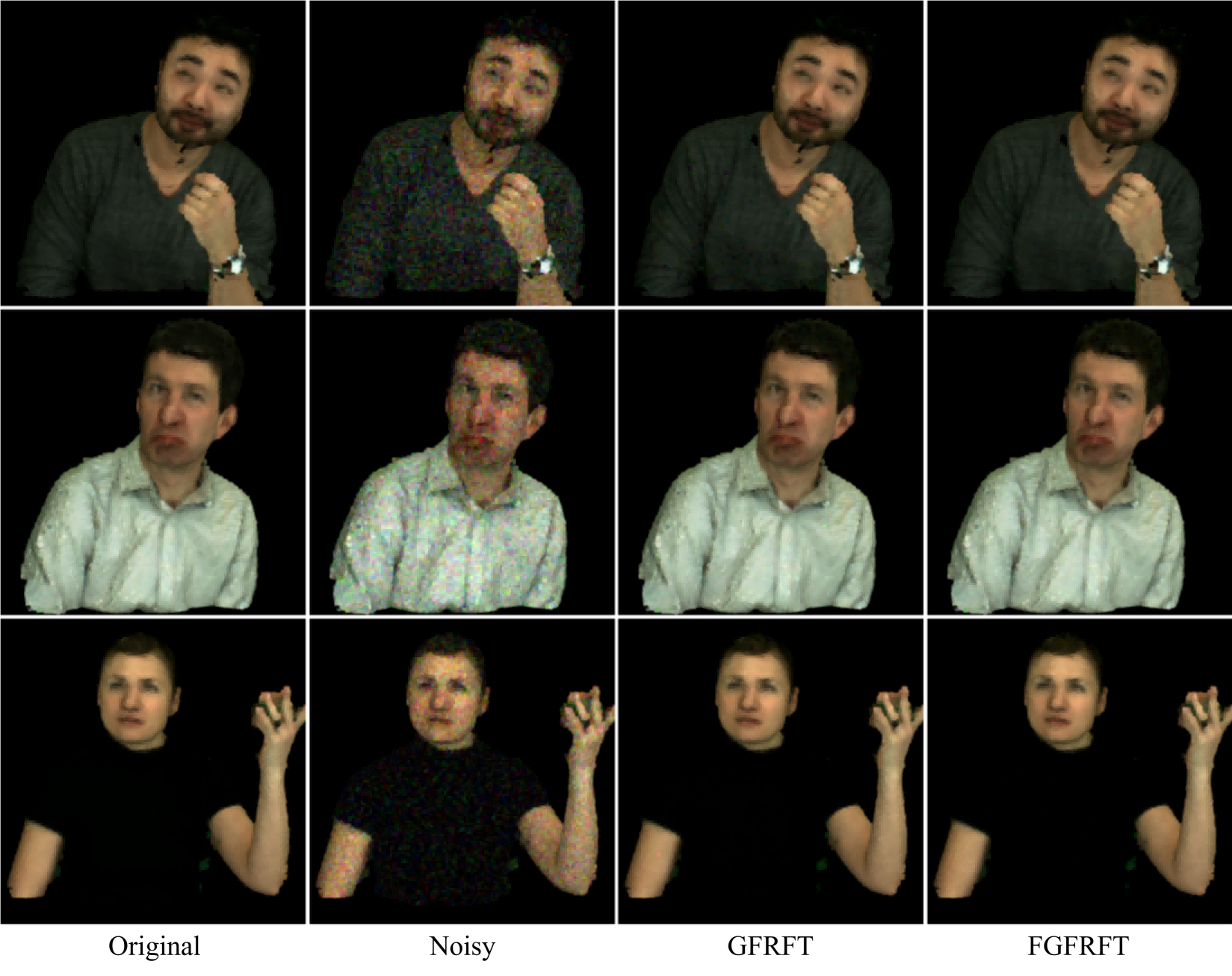}
	\caption{Visual comparison of point-cloud denoising
		results.}
	\label{fig:cloud_denoising}
\end{figure}

On the tested point-cloud examples, the proposed
FGFRFT yields PSNR values that are consistently no
worse than those of the exact GFRFT, with the largest
gain observed on David9. Meanwhile, the average
processing time per batch is reduced from $3900$ s to
$1200$ s, and the total runtime is reduced from
$5.42$ h to $1.67$ h, corresponding to a speedup of
approximately $3.25\times$.

Overall, these experiments show that replacing the
exact GFRFT with the proposed FGFRFT can
substantially reduce computational cost while
maintaining competitive denoising performance on the
tested image and point-cloud examples. In several
cases, the approximate method attains slightly higher
PSNR or SSIM than the exact implementation, but this
should be interpreted only as an empirical observation
under the current optimization setup, rather than as
a universal performance advantage of the
approximation.

\section{Conclusion}
This paper proposes FGFRFT, a fast approximation framework for repeated GFRFT order updates under unitary GFT matrices. The proposed method combines Fourier-series truncation, exact treatment of the spectral point $\lambda=-1$, and matrix-power caching to reduce the online complexity of dense operator construction from $O(N^3)$ to $O(2LN^2)$. We provide theoretical analysis of FGFRFT, including truncation-error bounds, approximate unitarity and additivity, reconstruction-error bounds, and a characterization of the approximation accuracy--efficiency tradeoff in the random-unitary setting. Experiments show that FGFRFT achieves substantial online acceleration while maintaining competitive performance in transform-order learning, image denoising, and point-cloud denoising tasks. Future work will explore more scalable caching strategies, sparse implementations, and broader evaluations on real-world graphs.

\bibliographystyle{IEEEtran}
\balance
\bibliography{reference}

\end{document}